\def\defn#1{\textit{\textbf{\boldmath #1}}}
\newtheorem{theorem}{Theorem}
\newtheorem{lemma}[theorem]{Lemma}
\newtheorem{corollary}[theorem]{Corollary}
\newtheorem{proposition}[theorem]{Proposition}
\newtheorem{observation}[theorem]{Observation}
\theoremstyle{definition}
\theoremstyle{remark}
\newtheorem*{note*}{Note}
\newtheorem*{remark*}{Remark}
\newtheorem{claim}[theorem]{Claim}
\newtheorem*{claim*}{Claim}
\newcommand{\remove}[1]{{}}
\newcommand{\anna}[1]{{\color{cyan} Anna: #1}}
\newcommand{\whereis}[1]{{}}
\title{Forbidding Edges between Points in the Plane to Disconnect the Triangulation Flip Graph
}
\author[1]{Reza Bigdeli}
\author[1]{Anna Lubiw}
\affil[1]{Cheriton School of Computer Science, University of Waterloo, Canada. {\tt \{rbigdeli,alubiw\}@uwaterloo.ca}}
\begin{document}

\maketitle


\begin{abstract}

The \defn{flip graph} for a set $P$ of  points in the plane has a vertex for every triangulation of $P$, and an edge when two triangulations differ by one \defn{flip} that replaces one triangulation edge by another.  
The flip graph is known to have some connectivity properties:
(1) the flip graph is connected; 
(2) connectivity still holds 
when restricted to triangulations containing some \emph{constrained} edges between the points; 
(3) for $P$ in general position of size $n$, the flip graph is $\lceil \frac{n}{2} -2 \rceil$-connected, a recent result of Wagner and Welzl (SODA 2020).

We introduce the study of  
connectivity properties of the flip graph when some edges between points are \emph{forbidden}.  An edge $e$ between two points is a \defn{flip cut edge} if eliminating triangulations containing $e$ results in a disconnected flip graph.  More generally, a set $X$ of edges between points of $P$ is a \defn{flip cut set}
if eliminating all triangulations that contain edges of $X$ results in a disconnected flip graph. 
The \defn{flip cut number} of $P$ is the minimum size of a flip cut set.

We give a characterization of flip cut edges that leads to
an $O(n \log n)$ time
algorithm to test if an edge is a flip cut edge and, with that as preprocessing, an $O(n)$ time algorithm to test if two triangulations are in the same connected component of the flip graph. 
For a set of $n$ points in convex position (whose flip graph is the 1-skeleton of the associahedron) we prove that the flip cut number is $n-3$.
\end{abstract}

\section{Introduction}
Given a set $P$ of $n$ points in the plane, which may include collinear points, an \defn{edge} of $P$ is a line segment $pq$ 
that intersects $P$ in exactly the two endpoints $p$ and $q$.
A \defn{triangulation} of $P$ is maximal set of non-crossing edges. 
Triangulations 
have important applications in graphics and mesh generation~\cite{bern1995mesh,edelsbrunner2001geometry} and are of significant mathematical interest~\cite{de2010triangulations}.  

A fundamental 
approach to understanding
triangulations is 
by means of \emph{flips}.
A \defn{flip} operates on a triangulation by removing one edge $pq$ and adding another edge $uv$ to obtain a new triangulation---of necessity, the edges $pq$ and $uv$ will cross and their four endpoints will form a convex quadrilateral with no other points of $P$ inside it. 
For example, in Figure~\ref{fig:first-flip-cut-edge}, edge $a_1 b_1$ can be flipped to $uv$. 
In 1972, Lawson~\cite{lawson1972generation,LAWSON1972365}
proved that any triangulation of point set $P$ can be 
reconfigured to any other  triangulation of $P$ by a sequence of flips.
This can be expressed as connectivity of the \defn{flip graph}, which has a vertex for every triangulation of $P$ and an edge when two triangulations differ by a flip. 
Reconfiguring triangulations via flips is well studied~\cite{bose2009flips}, 
but there are 
some very interesting open questions, and many properties of flip graphs remain to be discovered.

There is considerable work on  distances and diameter of flip graphs~\cite{eppstein,hanke1996distance,hurtado1999flipping,kanj2017computing,pournin2014diameter, sleator1988rotation}.
The worst-case diameter of the flip graph is $\Theta(n^2)$ for general point sets, and
$2n-10$ for $n > 12$
points in convex position~\cite{pournin2014diameter,sleator1988rotation}.
For a general point set, finding the distance in the flip graph between two triangulations (the ``flip distance'') is known to be NP-hard~\cite{LUBIW201517}, 
and even APX-hard~\cite{Pilz2014flip}.  However, 
a main open question is 
the complexity of the flip distance problem for convex point sets (is it NP-hard or in P?).

The case of points in convex position is especially interesting because there is a bijection between flips in triangulations of a convex point set and rotations in binary trees~\cite{sleator1988rotation}, so that flip distance becomes rotation distance between binary trees.
Finding the rotation distance between two binary trees is of great interest in biology for phylogenetic trees~\cite{dasgupta1997distances}, and in data structures for splay trees~\cite{sleator1988rotation}.
Furthermore, the flip graph
for $n$ points in convex position is the 1-skeleton of an ($n-3$)-dimensional polytope called the \emph{associahedron}~\cite{LEE1989551}, or see~\cite{ceballos2015many}. See Figure~\ref{fig:A_5}.
Although there is no geometric analogue of the associahedron for the case of triangulations of a general point set,  
some of its properties carry over to an abstract complex called the \emph{flip complex}.  For example, the 2-dimensional faces of the flip complex, like those of the associahedron, have size 4 or 5~\cite{lubiw2019orbit}.

An open frontier in the study of flip graphs has to do with expander properties, which would potentially lead to rapid mixing via random flips. 
For results on mixing in triangulations, see~\cite{caputo2015,molloy2001mixing, randall2000analyzing}.
It has recently been shown that the reconfiguration graph of bases of a matroid is an expander~\cite{anari2018log}, and it would be interesting to know if similar results hold for triangulation flip graphs.
More generally, researchers study connectivity properties of flip graphs.
Recently, Wagner and Welzl~\cite{wagner2020connectivity} showed that for $n$ points in general position in the plane, the flip graph is $\lceil \frac{n}{2} -2 \rceil$-connected.
For points in convex position, the flip graph is $(n-3)$-connected, which follows from Balinski's theorem~\cite{balinski1961graph} applied to the 1-skeleton of the associahedron, see~\cite{wagner2020connectivity}.

One intriguing thing about flip graphs of triangulations is that many properties carry over when we restrict to triangulations containing some specified non-crossing edges---so-called \emph{constrained triangulations}.  The subgraph of the flip graph consisting of triangulations that contain all the constrained edges is 
connected, as proved by Chew~\cite{chew1989constrained} using the \emph{constrained Delaunay triangulation}.

\begin{figure}[bt]
    \centering
    \includegraphics[scale=1.0]{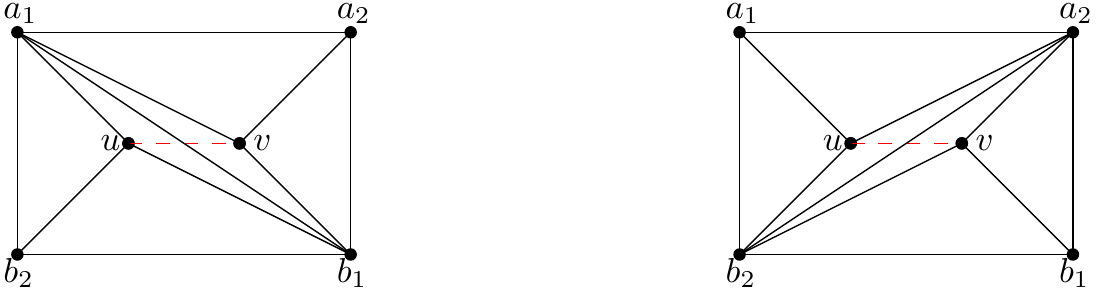}
    \caption{The smallest point set that 
    has a flip cut edge. The edge $e=uv$ is a flip cut edge since forbidding $e$ leaves two possible triangulations (as shown) and neither one allows a flip.
    }
    \label{fig:first-flip-cut-edge}
\end{figure}

\begin{figure}[bht]
    \centering
    \includegraphics[scale=0.6]{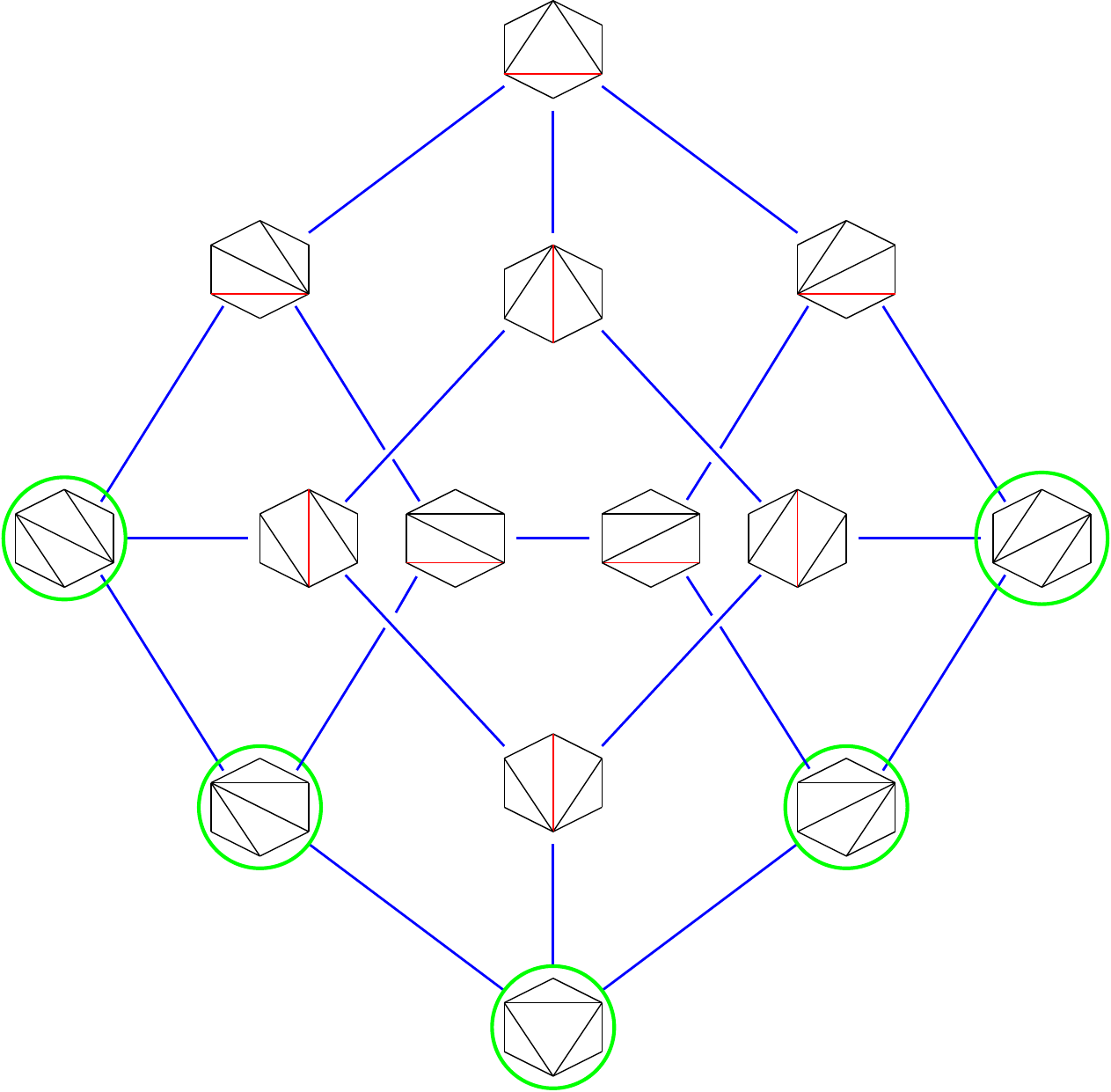}
    \caption{
    The flip graph of points of a convex hexagon  is the 1-skeleton of an associahedron.
    If we forbid the two red edges, the resulting flip graph (with vertices circled in green) is 
    connected. 
    }
    \label{fig:A_5}
\end{figure}

\medskip\noindent{\bf Our Results.}
In this paper we study connectivity properties of the flip graph when---instead of constraining certain edges between points to be present---we \emph{forbid} certain edges between points. To be precise, if a set $X$ of edges between points is forbidden, we eliminate all triangulations that contain an edge of $X$, and examine whether the flip graph on the remaining triangulations is connected.  
We say that $X$ is a \defn{flip cut set} if the resulting flip graph is disconnected; in the special case where $X$ is a single edge, we say that the edge is a \defn{flip cut edge}.  For example the edge $uv$ in Figure~\ref{fig:first-flip-cut-edge} is a flip cut edge, but the two red edges in Figure~\ref{fig:A_5} do not form a flip cut set.
We define the \defn{flip cut number} of a set of points to be the minimum size of a flip cut set.  This is analogous to the \emph{connectivity} of a graph---the minimum number of vertices whose removal disconnects the graph.

Since the structure of the flip graph depends on the edges between the points, it seems more natural to study connectivity of the flip graph after deleting some of these edges, rather than deleting some vertices of the flip graph, as standard graph connectivity does, and as the result of Wagner and Welzl~\cite{wagner2020connectivity} does.

As our main result, we characterize when an edge $e$ is a flip cut edge in terms of connectivity (in the usual graph sense) of the edges that cross $e$. Observe that a triangulation that does not contain $e$ must contain an edge that crosses $e$. 
We then use the characterization to give an $O(n \log n)$ time algorithm to test if a given edge $e$ in a point set of size $n$ is a flip cut edge.  With that algorithm as preprocessing, we give a linear time algorithm to test if two triangulations are still connected after we eliminate from the flip graph all triangulations containing edge $e$.

For the case of $n$ points in convex position, there are no flip cut edges and we show that  the flip cut number is $n-3$.  For example, in Figure~\ref{fig:A_5} the leftmost and rightmost triangulations become disconnected if we forbid one more edge, which yields a flip cut set of size $3$ for $n=6$.

\begin{figure}[th]
    \centering
    \includegraphics[scale=0.45]{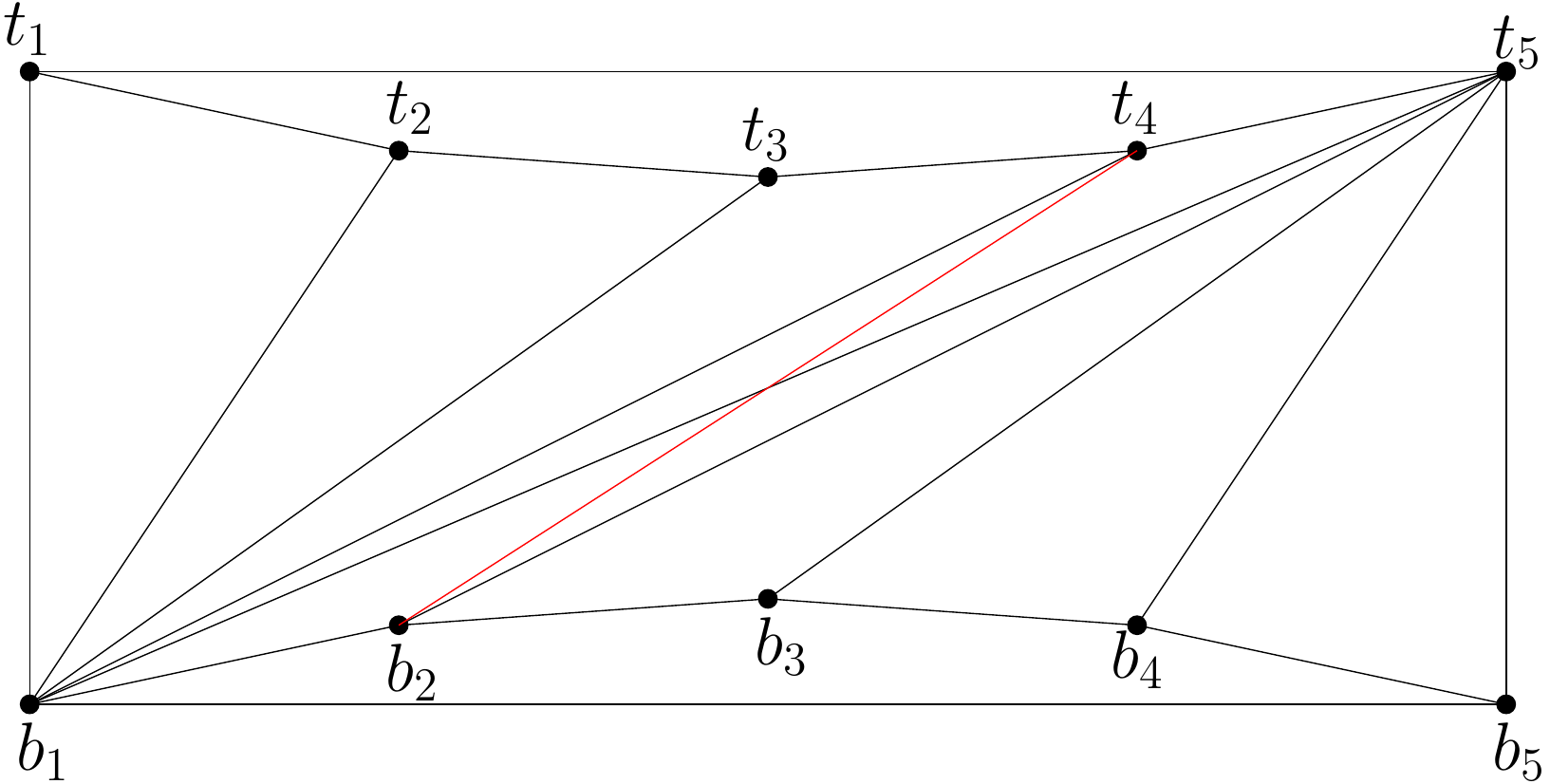}
    \caption{The ``channel'', and a triangulation that becomes \defn{frozen} (an isolated vertex in the flip graph) if we forbid the edge $b_2,t_{n-1}$ (in red). In fact, every edge $b_i t_j$, $i,j \notin \{1,5\}$ is a flip cut edge.
}
    \label{fig:channel}
\end{figure}

We show that a point set of size $n$ may have $\Theta(n^2)$ flip cut edges (see Figure~\ref{fig:channel}), and we show that a flip cut edge may result in $\Theta(n)$ disconnected components in the flip graph.  
We also examine various special point sets whose flip graphs have been previously studied,
such as points on an integer grid~\cite{caputo2015} and, more generally,  point sets without empty convex pentagons~\cite{eppstein}.
Our characterization of flip cut edges becomes simpler in the absence of empty convex pentagons.  Point sets without empty convex pentagons must
have collinear points; our results do not assume points in general position.

\subsection{Examples} 
\label{sec:examples}

Figure~\ref{fig:first-flip-cut-edge} shows the smallest example of a point set that has a flip cut edge---in fact, by forbidding one edge, we get a triangulation from which no flips are possible.  Such an isolated vertex in the flip graph is called a \defn{frozen} triangulation.

More generally, the triangulation of a \defn{channel} shown in Figure~\ref{fig:channel} becomes frozen (modulo triangulating the upper and lower convex subpolygons) when we forbid
one flip cut edge. 
In fact, any edge joining an interior vertex of the top curve and an interior vertex of the bottom curve is a flip cut edge.  So there are $\Theta(n^2)$ flip cut edges. We justify this more carefully in Section~\ref{sec:channels}.
Channels have been studied previously---they provide lower bounds on the maximum diameter of the flip graph~\cite{hurtado1999flipping}, and are building blocks for proving NP-hardness of flip distance~\cite{LUBIW201517,Pilz2014flip}.

Grid points are another well-studied case for triangulation flips~\cite{caputo2015}, in part because of physics applications.
For $n$ points lying on a $\sqrt{n} \times \sqrt{n}$ grid,
there are again $\Theta(n^2)$ flip cut edges. 
In fact, for an infinite grid, every edge is a flip cut edge, though boundary effects interfere in finite grids.
See
Section~\ref{sec:grids}. 
Points in a grid have the special property that there are no empty convex pentagons.  Flips for  point sets without empty convex pentagons were studied by Eppstein~\cite{eppstein}.  Such point sets must have collinear points~\cite{abel2011every}. 
See Section~\ref{sec:no-EC5}.

\begin{figure}[t]
    \centering
    \includegraphics[scale=0.6]{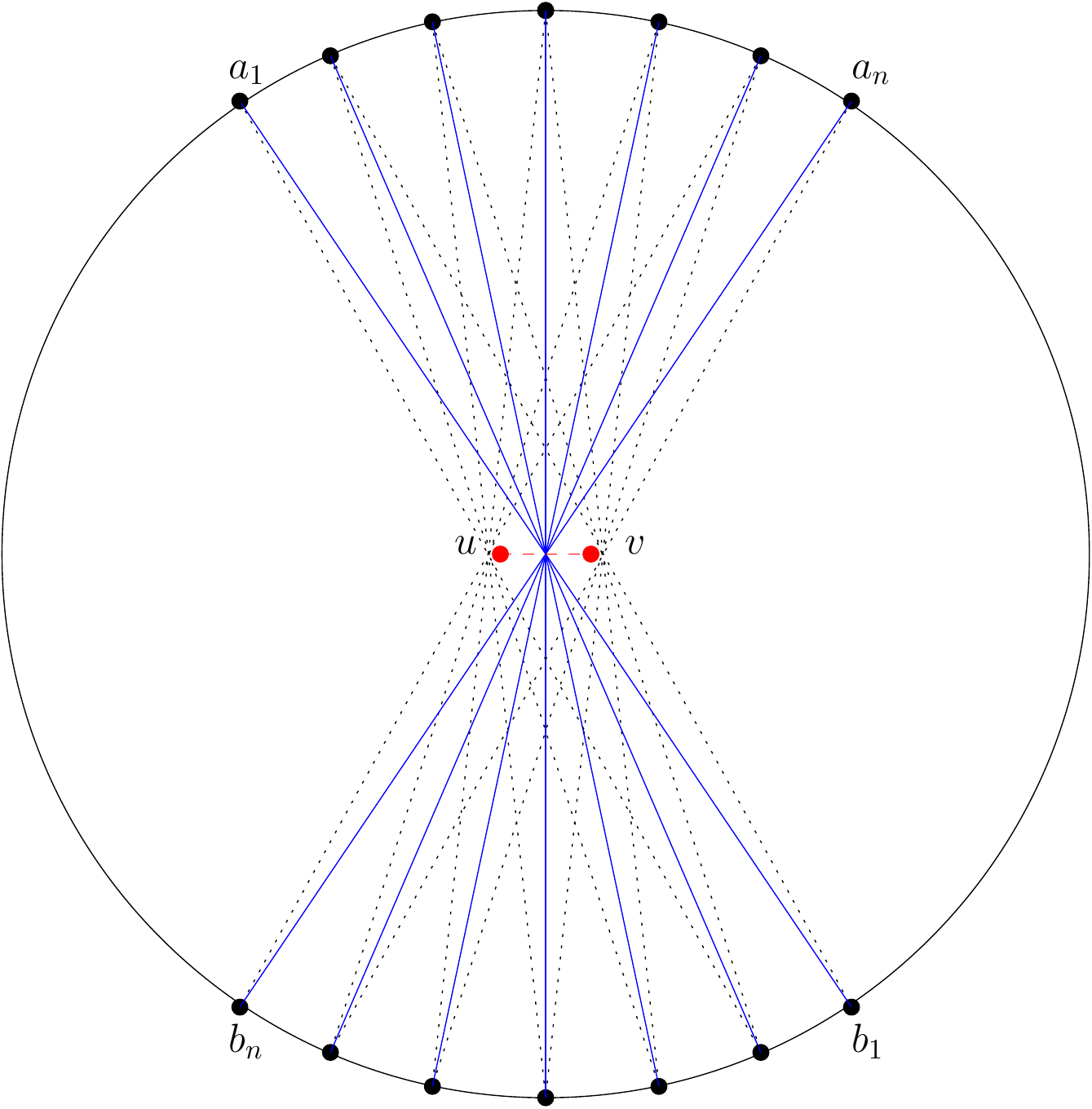}
    \caption{In this ``hourglass'' $uv$ is a flip cut edge that creates $n$ disconnected components in the flip graph, one for each $a_i b_i$.
}
    \label{fig:dis_comp}
\end{figure}

The ``hourglass'' shown in Figure~\ref{fig:dis_comp} has a flip cut edge $e=uv$ such that forbidding $e$ results in $\Theta(n)$ disconnected components in the flip graph, which is the most possible.  See Section~\ref{sec:hourglasses}.

Some point sets have $\Theta(n^2)$ flip cut edges but at the other extreme, 
some point sets have no flip cut edges. For example, there are no flip cut edges for points in convex position. 
The standard way to flip between two triangulations of a convex polygon is via a star triangulation with all edges incident to one point. If there is a point not incident to any forbidden edge, 
we can still flip to a star centered on that point.  See Figure~\ref{fig:flip-to-star}. A more detailed version of this argument is given in Lemma~\ref{lemma:F-touches-all} when we investigate  flip cut sets for points in convex position. 

\begin{figure}[ht]
    \centering
    \includegraphics[scale=0.7]{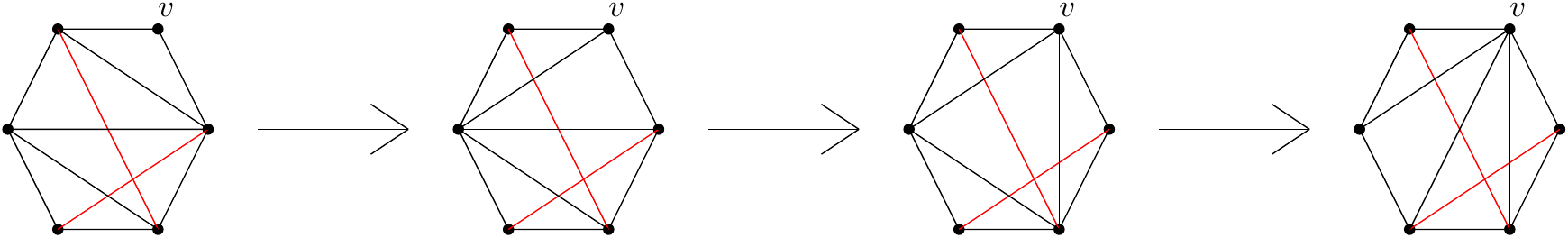}
    \caption{A convex hexagon with a set $X$ of two forbidden edges (in red), and a flip sequence to connect one triangulation in ${\cal T}_{-X}$ to the triangulation that is a star at $v$, without using any forbidden edges.
}
    \label{fig:flip-to-star}
\end{figure}

Figure~\ref{fig:PointSetFlipCutEdges} shows some more examples of flip cut edges in point sets. 

\begin{figure}[ht]
    \begin{subfigure}{0.49\textwidth}
        \centering
          \includegraphics[width=.9\linewidth]{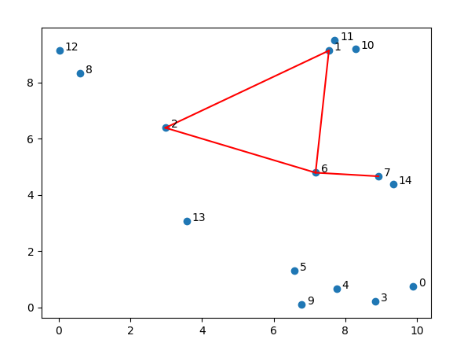}
          \label{fig:PointSetFlipCutEdges1}
    \end{subfigure}%
    \begin{subfigure}{0.49\textwidth}
          \centering
          \includegraphics[width=.9\linewidth]{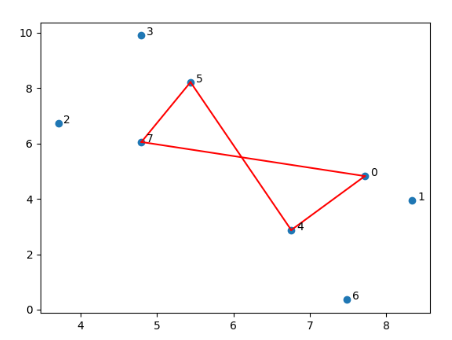}
          \label{fig:PointSetFlipCutEdges2}
    \end{subfigure}
    \begin{subfigure}{0.49\textwidth}
          \centering
          \includegraphics[width=.9\linewidth]{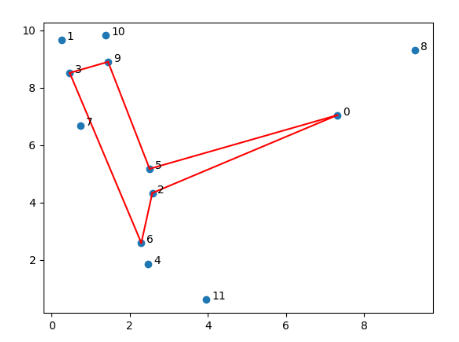}
          \label{fig:PointSetFlipCutEdges3}
    \end{subfigure}
    \begin{subfigure}{0.49\textwidth}
          \centering
          \includegraphics[width=.9\linewidth]{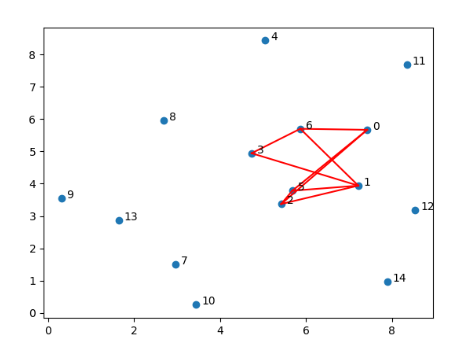}
          \label{fig:PointSetFlipCutEdges4}
    \end{subfigure}
    \caption{Some point sets and their flip cut edges (in red).
    }
    \label{fig:PointSetFlipCutEdges}
\end{figure}

\subsection{Notation and Definitions}

We denote the flip graph of point set $P$ by \defn{${\mathcal F}(P)$}, or just \defn{$\mathcal F$}, when $P$ is clear from the context.

For a subset $E$ of the edges of $P$, let \defn{${\cal T}_{+E}(P)$} be the set of triangulations of $P$ that include all the edges of $E$.
These are known as \defn{constrained triangulations}.  Let \defn{${\cal F}_{+E}(P)$} be the subgraph of the flip graph induced on the vertex set ${\cal T}_{+E}(P)$.  It is known that ${\cal F}_{+E}(P)$ is connected \cite{chew1989constrained}.

Let \defn{${\cal T}_{-E}(P)$} be the set of triangulations of $P$ that include \emph{none} of the edges of $E$,   
and let \defn{${\cal F}_{-E}(P)$} be the subgraph of the flip graph induced on ${\cal T}_{-E}(P)$.
When $E$ consists of a single edge $e$, we will write ${\cal T}_{-e}(P)$, and so on.  Also, we will omit $P$ when the point set is clear from the context.

A subset $E$ of edges of $P$ is a \defn{flip cut set} if the flip graph $\mathcal{F}_{-E}(P)$ is disconnected.  
The smallest size of a flip cut set is called the \defn{flip cut number} of $P$.
If $\{e\}$ is a flip cut set of size one, we call $e$ a \defn{flip cut edge}.

An \defn{empty convex quadrilateral} or \defn{EC4} is a set of 4 points in $P$ that form a convex quadrilateral with no other points of $P$ inside or on the boundary. We also use \defn{EC3} for empty triangles, \defn{EC5} for empty convex pentagons, etc. 

Point set $P$ is a \defn{convex point set} if all the points of $P$ are extreme points of the convex hull of $P$, i.e., for every point $p \in P$, there is a line through $p$ with all other points of $P$ strictly to one side of the line.
Point set $P$ is in \defn{general position} if no three points are collinear.

\section {Flip Cut Edges}
\label{sec:flip-cut-edges}

In this section we characterize flip cut edges (Section~\ref{sec:characterization}) and give an 
$O(n \log n)$
time algorithm to test if a given edge is a flip cut edge (Section~\ref{sec:flip-cut-edge-algorithm}).  In Section~\ref{sec:some-bounds} we examine some special point sets and establish bounds on the number of flip cut edges and the number of connected components.

\subsection{Characterizing Flip Cut Edges}
\label{sec:characterization}

Consider an edge $e=uv$ of $P$.  Let us orient $e$ horizontally so we can use the terms ``above'' and ``below'' to refer to the two sides of $e$. In the horizontal orientation, suppose that $u$ lies to the left of $v$.

Let $Y$ be the set of edges $f$ of $P$ such that $f$ crosses $e$.
Let $G_Y$ be the 
line graph of $Y$, i.e.,  the vertex set of $G_Y$ is $Y$, 
and two edges of $Y$ are adjacent in $G_Y$ if they meet at a point See Figure \ref{fig:Y}.

\begin{figure}[ht]
    \centering
    \includegraphics[scale=0.7]{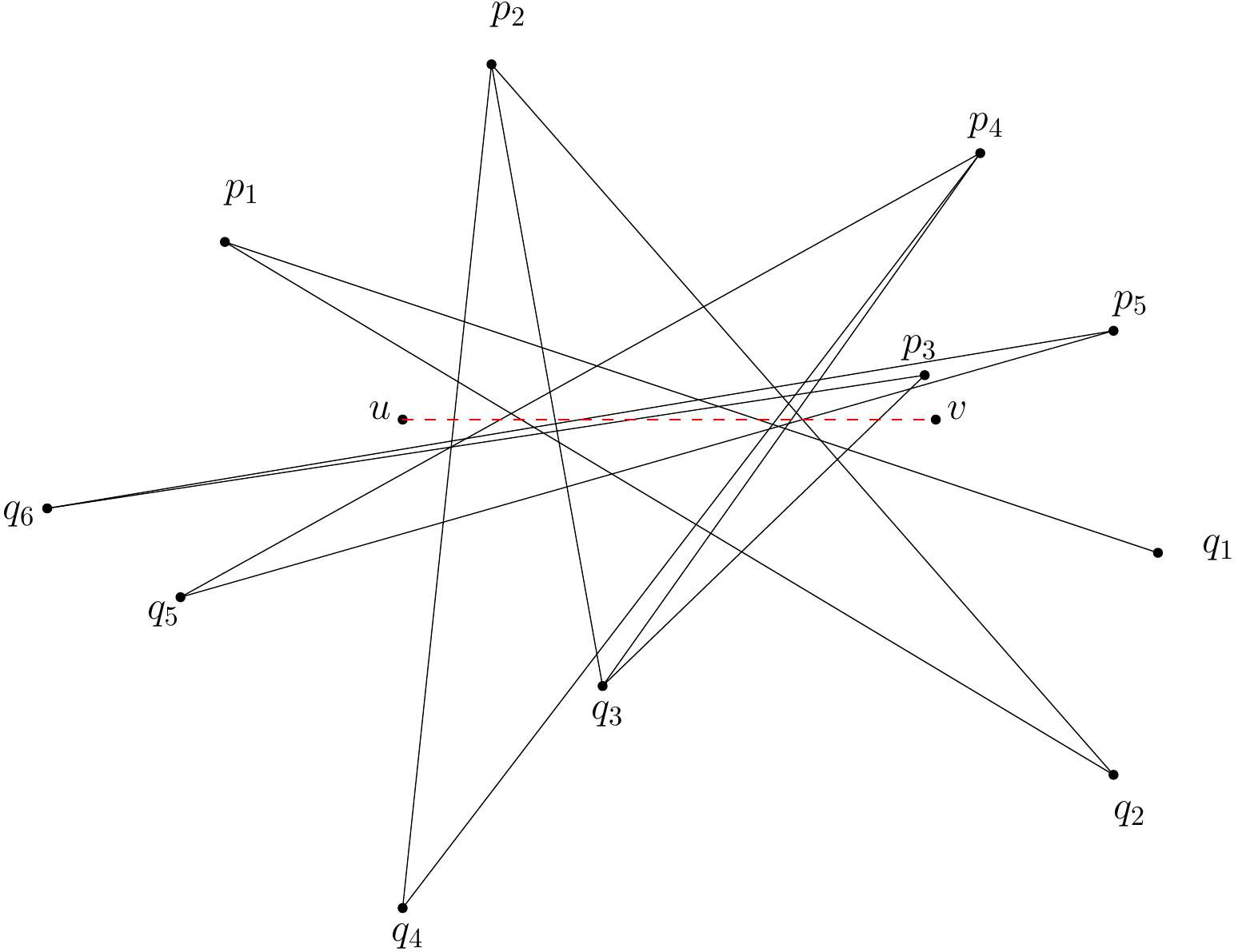}
    \caption{The edges $Y$ that cross $e = uv$.  Here, $G_Y$ is connected, so $e$ is not a flip cut edge.}
    \label{fig:Y}
\end{figure}

We will prove that $e$ is a flip cut edge if and only if $G_Y$ is disconnected. In fact, we will be able to identify the connected components of ${\cal F}_{-e}$ from $G_Y$.

\begin{observation}
\label{obs:union}
${\cal T}_{-e}$ is the union of the sets of triangulations ${\cal T}_{+f}$ for $f \in Y$.  Each flip graph ${\cal F}_{+f}$ is connected.
\end{observation}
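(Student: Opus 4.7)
The plan is to verify the two assertions separately. The first is the set equality ${\cal T}_{-e} = \bigcup_{f \in Y} {\cal T}_{+f}$, which I will establish by proving both containments. The second assertion, that each ${\cal F}_{+f}$ is connected, is exactly Chew's constrained-triangulation result cited in the introduction, applied with the single constrained edge $f$; so it requires no further argument.

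The easy containment is $\bigcup_{f \in Y} {\cal T}_{+f} \subseteq {\cal T}_{-e}$: any triangulation that contains some $f \in Y$ cannot contain $e$, because $f$ crosses $e$ by definition of $Y$ and the edges of a triangulation are pairwise non-crossing. For the reverse containment, take $T \in {\cal T}_{-e}$ and walk along the segment $e$ from $u$ to $v$. Since $u, v \in P$, the segment $e$ lies inside the convex hull of $P$ and is therefore covered by the triangles of $T$. At each step we are inside some triangle of $T$ and must exit either through the relative interior of an edge or at a vertex. If any exit occurs through the relative interior of an edge $f$ of $T$, then $f$ crosses $e$, so $f \in Y$ and $f \in T$, and we are done. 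Otherwise every exit is at a vertex of $T$; but any vertex of $T$ lying on $e$ is a point of $P$ on $e$, and the convention that an edge of $P$ has no other points of $P$ in its interior forces this vertex to be $u$ or $v$. Thus the trace of $e$ through $T$ is a path of $T$-edges from $u$ to $v$ with no intermediate vertices, i.e., the single edge $uv$; hence $uv \in T$, contradicting $T \in {\cal T}_{-e}$.

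There is no substantive obstacle. The only mildly subtle point is the walk argument, and its correctness depends entirely on the convention (fixed at the start of the introduction) that an edge of $P$ is a segment whose intersection with $P$ consists of exactly its two endpoints; this same convention rules out the potentially awkward situation where $e$ runs along the boundary of a triangle from $u$ to another vertex of $T$.
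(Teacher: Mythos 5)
Your proposal is correct and follows essentially the same route as the paper: the easy containment, the fact that any $T \in {\cal T}_{-e}$ must contain an edge crossing $e$, and Chew's constrained-triangulation theorem for connectedness of each ${\cal F}_{+f}$. The only difference is that you spell out (via the walk along the segment $uv$) the step that every triangulation avoiding $e$ contains an edge of $Y$, which the paper simply asserts; your justification is sound and handles the collinearity issue correctly via the convention that an edge of $P$ contains no other points of $P$.
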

\begin{proof}
For any edge $f$, ${\cal F}_{+f}$ is connected by properties of constrained triangulations~\cite{chew1989constrained}. 
For $f \in Y$, ${\cal T}_{+f}$ is a subset of ${\cal T}_{-e}$ because triangulations that contain $f$ cannot contain $e$.  Finally, since every triangulation of ${\cal T}_{-e}$  contains some edge of $Y$, ${\cal T}_{-e}$ is the union of ${\cal T}_{+f}$ for $f \in Y$.
\end{proof}

The observation says that every vertex of the graph ${\cal F}_{-e}$ appears in some ${\cal F}_{+f}$ for $f \in Y$.  In fact, every edge of the graph ${\cal F}_{-e}$ appears in some ${\cal F}_{+f}$ for $f \in Y$, as we will prove as part of Theorem~\ref{thm:new-characterization-2}.

Based on Observation~\ref{obs:union}, in order to identify connected components of the flip graph ${\cal F}_{-e}$, it suffices to figure out which subgraphs 
${\cal F}_{+f}$ are connected to which other ones in ${\cal F}_{-e}$, i.e., to know when there is a path in ${\cal F}_{-e}$ from an element of ${\cal T}_{+f}$ to an element of ${\cal T}_{+g}$.

Before giving the main theorem, we give one more observation.

\begin{observation}
\label{obs:Y-int-T-connected}
For any triangulation $T$ in ${\cal T}_{-e}$, the edges of $Y$ in $T$, which we denote $Y \cap T$, are connected in $G_Y$.  
\end{observation}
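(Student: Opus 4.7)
The plan is to order the edges of $Y \cap T$ by where they cross $e$ and show that consecutive edges in this order share an endpoint, hence are adjacent in $G_Y$. This gives a path in $G_Y$ through all vertices of $Y \cap T$, proving connectedness.

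First, I would note that since $T$ is a triangulation of $P$ and $e = uv \notin T$, the open segment $e$ lies in the union of the (open) triangles of $T$ together with the edges crossing it; no vertex of $P$ lies on the relative interior of $e$ by definition of an edge. So I can parametrize $e$ from $u$ to $v$ and record the edges of $T$ it crosses. Call these edges, in order of crossing, $f_1, f_2, \ldots, f_k$, with crossing points $p_1, \ldots, p_k$ on $e$. These are exactly the edges of $Y \cap T$. Since $T \in {\cal T}_{-e}$, we have $k \ge 1$: otherwise the open segment $e$ would lie entirely in a single face of $T$, and because $T$ is a triangulation with $u,v$ as vertices, we could then add $e$ to $T$, contradicting maximality.

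Next, I would argue that for each $1 \le i \le k-1$, the open subsegment of $e$ from $p_i$ to $p_{i+1}$ lies in the interior of a single triangle $\Delta_i$ of $T$. Indeed, this subsegment meets no edge of $T$ (by choice of the $f_j$), contains no vertex of $P$ (as $e$ is an edge), and is a connected subset of the plane minus the 1-skeleton of $T$, so it lies in a single face, which is a triangle. Both $f_i$ and $f_{i+1}$ bound $\Delta_i$: the subsegment exits $\Delta_i$ across $f_{i+1}$ and, running in the reverse direction, across $f_i$. Since $\Delta_i$ is a triangle, any two of its edges share a vertex; hence $f_i$ and $f_{i+1}$ meet at a point, i.e., they are adjacent in $G_Y$.

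It follows that $f_1, f_2, \ldots, f_k$ is a path in $G_Y$ on the vertex set $Y \cap T$, so $Y \cap T$ is connected in $G_Y$. The main thing to be careful about is the topological step of the second paragraph, i.e., that the open portion of $e$ between two consecutive crossings really does lie in a single triangular face of $T$; this uses only that $T$ is a triangulation and that $e$ has no interior points of $P$, and requires no additional assumption about the position of the points.
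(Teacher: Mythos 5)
Your argument is correct and is essentially a more detailed version of the paper's own proof: the paper simply observes that the triangles of $T$ crossing $uv$ form a path in the planar dual, whence the crossing edges form a connected subgraph of $G_Y$. You have unpacked exactly why that dual path exists — consecutive crossing edges bound a common triangle and hence share an endpoint — so the approach is the same, just made explicit.
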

\begin{proof}
The triangles of $T$ that cross the segment $uv$ form 
a path in the planar dual of the triangulation, and so the edges of $Y \cap T$ form
a connected subgraph of $G_Y$.
See Figure \ref{fig:theorem4}.
\end{proof}

\begin{figure}[ht]
    \centering
    \includegraphics[scale=0.6]{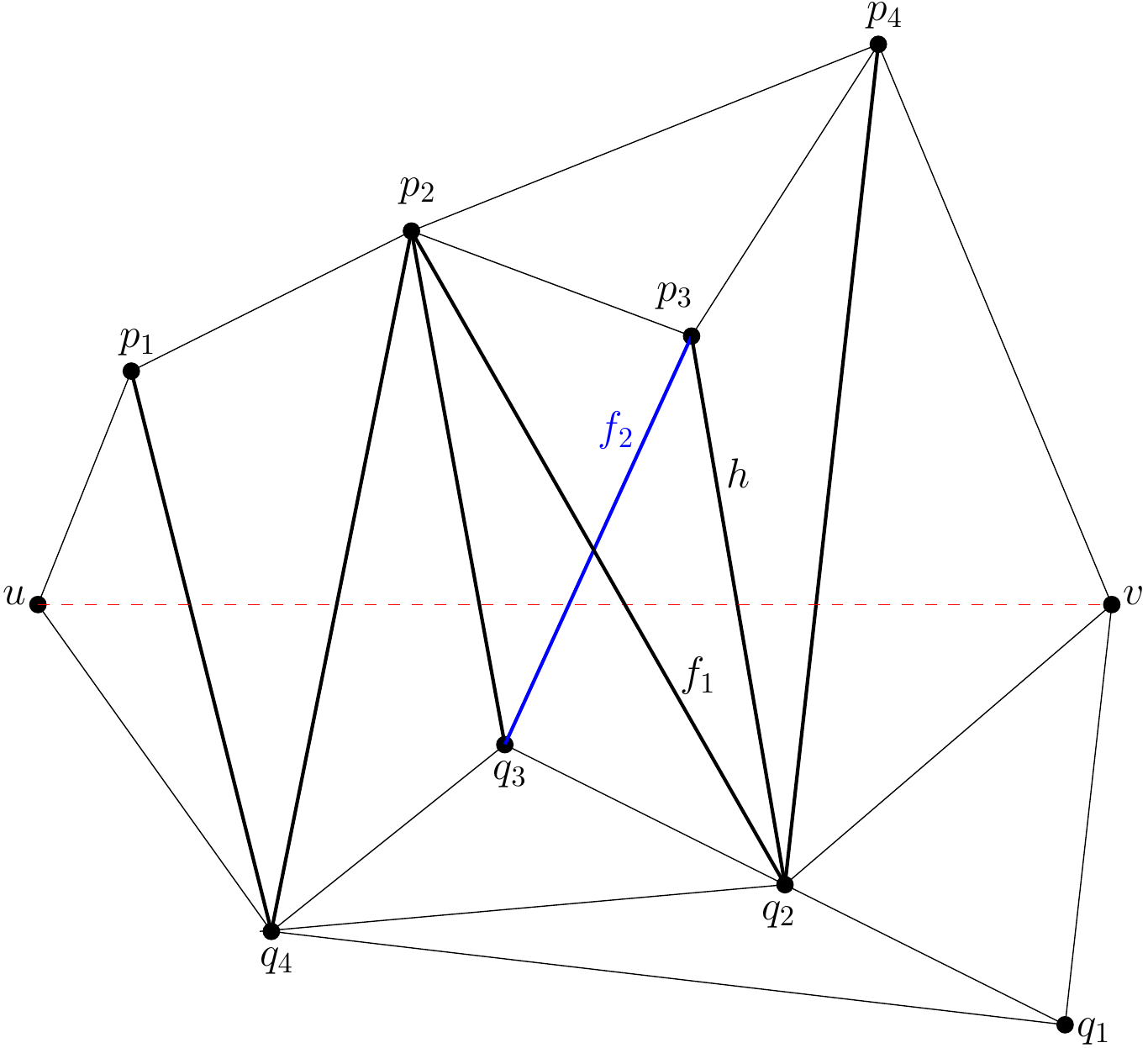}
    \caption{The edges of $Y \cap T$ (in thick black)
    are connected in $G_Y$.
    A flip of $f_1$ to $f_2$ requires another edge $h \in Y \cap T_1 \cap T_2$.
    }
    \label{fig:theorem4}
\end{figure}

\begin{theorem}
Edge $e$ is a flip cut edge if and only if $G_Y$ is disconnected.
More specifically, edges $f$ and $g$ in $Y$ are connected in $G_Y$ if and only if ${\cal T}_{+f}$ and  ${\cal T}_{+g}$ are connected in ${\cal F}_{-e}$.
\label{thm:new-characterization-2}
\end{theorem}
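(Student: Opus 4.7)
The plan is to prove the more specific second statement, since the first follows immediately: by Observation~\ref{obs:union}, $\mathcal{F}_{-e}$ is connected if and only if all the sets $\mathcal{T}_{+f}$ for $f \in Y$ lie in a single component of $\mathcal{F}_{-e}$, and by the second statement this is equivalent to $G_Y$ being connected. So I only need to establish the two-directional equivalence between connectivity of $f,g$ in $G_Y$ and of $\mathcal{T}_{+f},\mathcal{T}_{+g}$ in $\mathcal{F}_{-e}$.

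For the ``if'' direction, I would induct on the length of a path between $f$ and $g$ in $G_Y$, reducing to the case where $f$ and $g$ are adjacent in $G_Y$, i.e.\ share an endpoint $w$. Two edges sharing an endpoint do not cross, so $\{f,g\}$ extends to some triangulation $T$ of $P$. Because $f$ crosses $e$, $T$ cannot contain $e$, hence $T \in \mathcal{T}_{+f} \cap \mathcal{T}_{+g} \subseteq \mathcal{T}_{-e}$. By Observation~\ref{obs:union} each of $\mathcal{F}_{+f}$ and $\mathcal{F}_{+g}$ is connected, and both are subgraphs of $\mathcal{F}_{-e}$; since they share the vertex $T$, the sets $\mathcal{T}_{+f}$ and $\mathcal{T}_{+g}$ lie in the same component of $\mathcal{F}_{-e}$.

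For the ``only if'' direction, take a flip path $T_0, T_1, \dots, T_k$ in $\mathcal{F}_{-e}$ with $T_0 \in \mathcal{T}_{+f}$ and $T_k \in \mathcal{T}_{+g}$, and set $Y_i = Y \cap T_i$. By Observation~\ref{obs:Y-int-T-connected} each $Y_i$ induces a connected subgraph of $G_Y$, and $Y_i$ is nonempty because $T_i \in \mathcal{T}_{-e}$ must contain some edge crossing $e$. It suffices to show $Y_i \cap Y_{i+1} \neq \emptyset$ for every $i$, since then $\bigcup_i Y_i$ is connected in $G_Y$ and contains both $f \in Y_0$ and $g \in Y_k$. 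A flip removes a diagonal $\alpha$ of a convex quadrilateral $Q$ and inserts the other diagonal $\beta$, keeping the four boundary edges of $Q$ intact. If at most one of $\alpha,\beta$ lies in $Y$, then one of $Y_i, Y_{i+1}$ is contained in the other, so they share every vertex of the smaller (nonempty) set.

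The main obstacle, and the one genuinely geometric step, is the remaining case where both $\alpha$ and $\beta$ cross $e$. Here the key observation I plan to use is that $Q$ is empty and neither $u$ nor $v$ is a vertex of $Q$ (otherwise $uv$ would share an endpoint with one of the diagonals rather than crossing it), so $u$ and $v$ lie strictly outside $Q$; consequently the segment $uv$ enters $Q$ through one boundary edge and exits through another. These two boundary edges of $Q$ therefore cross $e$ and hence lie in $Y$, and being boundary edges of $Q$ they belong to both $T_i$ and $T_{i+1}$, supplying the required element of $Y_i \cap Y_{i+1}$. Combining both directions with the reduction in the first paragraph yields the theorem.
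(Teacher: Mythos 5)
Your proof is correct and takes essentially the same approach as the paper: the forward direction via a common triangulation containing two incident edges of $Y$ and transitivity, and the converse by showing each single flip in ${\cal F}_{-e}$ keeps the crossing edges in one component of $G_Y$, using Observation~\ref{obs:Y-int-T-connected} together with a side edge of the flip quadrilateral that crosses $e$ and survives the flip. Your case split on which diagonals lie in $Y$, and the explicit argument that $u,v$ lie strictly outside the empty quadrilateral so $uv$ must exit through a side edge, merely spell out details the paper delegates to its figure.
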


\begin{proof}
If $fg$ is an edge of $G_Y$, then, in the point set, the edges $f$ and $g$ are incident at a common endpoint so they do not cross, which implies that there is a triangulation containing $f$ and $g$, so ${\cal F}_{+f}$ and ${\cal F}_{+g}$ are connected (they have a triangulation in common).  Thus, by transitivity, if $f$ and $g$ are connected in $G_Y$ then ${\cal F}_{+f}$ and ${\cal F}_{+g}$ are connected.  

For the converse, it suffices to show that 
if we flip from $T_1$ in ${\cal T}_{-e}$ to $T_2$ in ${\cal T}_{-e}$, then $Y \cap T_1$ and $Y \cap T_2$ are connected in $G_Y$. 
First note that by Observation~\ref{obs:Y-int-T-connected}, $Y \cap T_1$ is a connected set in $G_Y$ and the same for $Y \cap T_2$.
We will 
show that $Y \cap T_1$ and $Y \cap T_2$ have an element in common. Let $f_i \in T_i \cap Y$, $i=1,2$. If either edge is in the other triangulation, we are done.  Otherwise, the flip from $T_1$ to $T_2$ must flip $f_1$ to $f_2$.  Then the EC4 formed by the endpoints of $f_1$ and $f_2$ has a side edge $h$ that is an edge of $Y$ in both $T_1$ and $T_2$.  
See Figure \ref{fig:theorem4}.
(Note that this argument shows that every edge of the flip graph ${\cal F}_{-e}$ lies in  ${\cal F}_{+h}$ for some $h \in Y$.) 
\end{proof}

\medskip\noindent{\bf Algorithmic implications.} The theorem gives an immediate polynomial time algorithm to test if $e$ is a flip cut edge: construct $G_Y$ (a graph on $O(n^2)$ vertices) and test connectivity. 
In Subsection~\ref{sec:flip-cut-edge-algorithm} we give a faster  $O(n \log n)$ time algorithm. 

Our other algorithmic goal is to ``identify'' the connected components of ${\cal F}_{-e}$.  Since the flip graph is exponentially large, we focus on the problem of testing whether two triangulations $T_1$ and $T_2$ in ${\cal T}_{-e}$ are in the same connected component of ${\cal F}_{-e}$.  Using Theorem~\ref{thm:new-characterization-2} we can do that as follows.  Pick $f_1$ in $Y \cap T_1$ and $f_2$ in $Y \cap T_2$ (note that such edges exist).  Then $T_1 \in {\cal T}_{+f_1}$ and $T_2 \in {\cal T}_{+f_2}$.  So $T_1$ and $T_2$ are connected in ${\cal F}_{-e}$ iff ${\cal T}_{+f_1}$ and ${\cal T}_{+f_2}$ are connected in  ${\cal F}_{-e}$ iff (by Theorem~\ref{thm:new-characterization-2}) $f_1$ and $f_2$ are connected in $G_Y$, which we can test in polynomial time. 
In Subsection~\ref{sec:flip-cut-edge-algorithm} we give a faster  $O(n)$ time algorithm. 

\medskip\noindent{\bf Alternative characterization of flip cut edges.} 
For the efficient algorithms in Section~\ref{sec:flip-cut-edge-algorithm}, 
we 
need an alternative characterization of flip cut edges in terms of a subgraph of $G_Y$.

For any $f \in Y$, let $Q(f)$ be the convex quadrilateral formed by the endpoints of $f$ and $e=uv$.  
Let $Z$ be the set of edges $f \in Y$ such that $Q(f)$ is an EC4.  
Let $G_Z$ be the line graph of $Z$.

Note that any $f \in Z$ has one point above $e$ and one point below $e$, and these points make empty triangles with $e$. 
Let $A$ be the set of points $a$ above $e$ such that $auv$ is an empty triangle (an EC3).  Let $B$ be the set of points $b$ below $e$ such that $bvu$ is an empty triangle.  Thus $Z$ consist of the edges from $A$ to $B$ that cross $e=uv$, i.e.,  $Z = Y \cap (A \times B)$.
See Figure~\ref{fig:order}.

We will prove an analogue of Theorem~\ref{thm:new-characterization-2}:

\begin{theorem}
\label{thm:Z-characterization}
Edge $e$ is a flip cut edge if and only if $G_Z$ is disconnected.    
\end{theorem}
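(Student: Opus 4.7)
The plan is to derive Theorem~\ref{thm:Z-characterization} from Theorem~\ref{thm:new-characterization-2}. Since $Z \subseteq Y$ and adjacency in either graph is defined by a shared endpoint, $G_Z$ is exactly the subgraph of $G_Y$ induced on the vertex set $Z$. It therefore suffices to prove (i) every $f \in Y$ lies in the same connected component of $G_Y$ as some $f' \in Z$, and (ii) any two $f_1, f_2 \in Z$ that are in the same $G_Y$-component are also in the same $G_Z$-component. Together these give that $G_Y$ is connected iff $G_Z$ is, so by Theorem~\ref{thm:new-characterization-2}, $e$ is a flip cut edge iff $G_Z$ is disconnected.

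For (i), I would use a flip-theoretic argument. Pick any $T \in \mathcal{T}_{+f}$ (so $T \in \mathcal{T}_{-e}$) and any $T^* \in \mathcal{T}_{+e}$, and let $T = T_0, T_1, \dots, T_N = T^*$ be a flip sequence, which exists by connectivity of $\mathcal{F}(P)$. Let $k$ be the smallest index with $e \in T_k$. Then the flip $T_{k-1} \to T_k$ removes some edge $f''$ and adds $e$, so $f''$ and $e$ are the two diagonals of an empty convex quadrilateral, meaning $f'' \in Z$. Since $T_0, \dots, T_{k-1}$ all lie in $\mathcal{T}_{-e}$, the triangulations $T$ and $T_{k-1}$ are in the same component of $\mathcal{F}_{-e}$. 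By Theorem~\ref{thm:new-characterization-2}, $f \in Y \cap T$ and $f'' \in Y \cap T_{k-1}$ are in the same $G_Y$-component.

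For (ii), I would use a geometric substitution. Define $\pi$ on points of $P$ above or below $e$ by $\pi(p) = p$ if $p \in A \cup B$, and otherwise (say $p$ above $e$ with triangle $upv$ nonempty) let $\pi(p)$ be a canonically chosen point of $A$ inside triangle $upv$, obtained by iterating the ``pick an interior point'' step. Extend to edges by $\pi(pq) = \pi(p)\pi(q)$. The EC3-emptiness of triangles $u\pi(p)v$ and $u\pi(q)v$ forces the above- and below-$e$ portions of $\pi(p)\pi(q)$ to lie inside empty triangles, so generically $\pi(p)\pi(q)$ contains no other points of $P$ and crosses segment $uv$ interiorly; hence $\pi(g) \in Z$ for each $g \in Y$. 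The descent $g = pq \to \pi(p)q \to \pi(p)\pi(q)$ through successively-adjacent $Y$-edges shows $g$ and $\pi(g)$ lie in the same $G_Y$-component. If $g_i, g_{i+1}$ share endpoint $p$, then $\pi(g_i), \pi(g_{i+1})$ share $\pi(p)$ and are $G_Z$-adjacent. Applied to a $G_Y$-path $f_1 = g_0, \dots, g_k = f_2$ with $f_1, f_2 \in Z$ (so $\pi(f_1) = f_1$ and $\pi(f_2) = f_2$), the sequence $\pi(g_0), \dots, \pi(g_k)$ is a walk in $G_Z$ from $f_1$ to $f_2$.

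The main obstacle is the validity of $\pi(p)\pi(q)$ as an edge of $P$ when points are collinear. The generic argument above rules out intermediate $P$-points on $\pi(p)\pi(q)$, but a degenerate case arises when $\pi(p), u, \pi(q)$ (or $\pi(p), v, \pi(q)$) are collinear, forcing $u$ (resp.\ $v$) onto the segment $\pi(p)\pi(q)$. Since the paper explicitly allows collinear points, this must be handled---either by refining the canonical choice of $\pi(p)$ via a tiebreaking rule avoiding such collinearities, or by rerouting through a short local $G_Z$-path whenever the direct substitution yields an invalid edge.
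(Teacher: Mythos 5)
Your proposal is correct and takes essentially the same route as the paper: your step (i) is Lemma~\ref{lemma:YcontainZ}, proved by the same ``last flip before $e$ appears'' argument, and your step (ii) is the nontrivial direction of Lemma~\ref{lemma:G_ZconnectedG_Y}, proved by the same substitution of each endpoint by a point of $A$ (resp.\ $B$) inside its empty triangle (the paper chooses the point closest to the line $L$ through $uv$ and works along a shortest $G_Y$-path, which is only a cosmetic difference), after which Theorem~\ref{thm:new-characterization-2} finishes the proof exactly as you outline. The one loose end you flag is not actually an issue: since $pq$ crosses the \emph{open} segment $uv$ and $\pi(p) \in \triangle puv$, $\pi(q) \in \triangle quv$, the segment $\pi(p)\pi(q)$ must cross the open segment $uv$ as well---this is the paper's remark that $Q(f)$ is convex and convexity is preserved when the apexes are moved into their triangles, and it also follows from a short orientation computation writing $\pi(p)$, $\pi(q)$ in barycentric coordinates---so $u$ or $v$ can never lie on $\pi(p)\pi(q)$, and no tiebreaking rule or rerouting is needed; what you should do instead is state and prove this crossing claim explicitly rather than asserting it ``generically.''
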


We will also be able to 
characterize connected components of ${\cal F}_{-e}$ in terms of $G_Z$, but this 
cannot be analogous to Theorem~\ref{thm:new-characterization-2},
because not every triangulation contains an edge of $Z$.
We begin with some preliminary results.

\begin{lemma}
\label{lemma:G_ZconnectedG_Y}
Edges $f,g \in Z$ are connected in $G_Z$ if and only if they are connected in $G_Y$.
\end{lemma}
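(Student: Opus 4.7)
The forward direction is immediate: since $Z \subseteq Y$ and adjacency in both line graphs is the condition of sharing an endpoint, $G_Z$ is precisely the induced subgraph of $G_Y$ on vertex set $Z$, so any $G_Z$-path is a $G_Y$-path.

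For the reverse direction, my plan is to recast the problem through the bipartite incidence graphs. Let $H_Y$ be the bipartite graph whose vertices are the endpoints of the edges in $Y$ (split into above-$e$ and below-$e$) and whose edges are $Y$; let $H_Z$ be its induced subgraph on $A \cup B$ with edge set $Z$. Two edges of $Y$ lie in the same component of $G_Y$ iff their endpoints lie in the same component of $H_Y$, and the same holds for $G_Z$ and $H_Z$, so it is enough to show that any two points $p, p' \in A \cup B$ connected in $H_Y$ are connected in $H_Z$.

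The key geometric lemma I will prove is the following: given points $x_1, x_2$ above $e$ and $y$ below $e$ with $x_1 y, x_2 y \in Y$, there exists $y^* \in B$ with $x_1 y^*, x_2 y^* \in Y$; a symmetric statement holds with the sides swapped. For each $x$ above $e$, the set $V(x)$ of below-$e$ points $z$ such that $xz$ crosses $uv$ is the below-$e$ portion of the convex cone with apex $x$ through segment $uv$, so $V(x_1) \cap V(x_2)$ is convex. Since $y \in V(x_1) \cap V(x_2)$ and $u, v$ lie on its boundary, the entire triangle $yuv$ is contained in $V(x_1) \cap V(x_2)$; if $y \notin B$, any witness $y' \in P$ in the interior of $yuv$ again lies in $V(x_1) \cap V(x_2)$, and the triangle $y'uv$ is a strict subset of $yuv$ since $y \notin y'uv$, so iterating strictly decreases the number of interior $P$-points and terminates at some $y^* \in B$. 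The same idea gives a one-sided projection: any $x$ above $e$ with $x \notin A$ has some $a \in A$ with triangle $auv$ contained in triangle $xuv$.

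With these tools I convert an $H_Y$-walk $p = w_0, w_1, \ldots, w_m = p'$ with $p, p' \in A \cup B$ into an $H_Z$-walk by induction on $m$. Whenever an intermediate $w_i$ fails to lie in $A \cup B$, its two neighbors $w_{i-1}, w_{i+1}$ lie together on the opposite side of $e$ from $w_i$; after projecting those neighbors inward to $A$ or $B$ via the auxiliary projection if they too are hidden, the key lemma supplies a common $H_Z$-neighbor through which the subwalk $w_{i-1} w_i w_{i+1}$ can be rerouted. The main obstacle I expect is verifying that every segment introduced by these replacements is a genuine edge of $P$ (i.e., contains no other point of $P$). A segment from $a \in A$ to $b \in B$ is automatically safe, since any third $P$-point on it would lie in the empty triangle $auv$ or $buv$; for the projection step, one splits the segment at the first blocking $P$-point and iterates, with termination guaranteed by the same strict-decrease argument on interior $P$-points of the relevant empty triangles.
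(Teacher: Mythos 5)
Your forward direction is fine and matches the paper. The reverse direction, however, has a genuine gap: your key geometric lemma is false as stated. The region $V(x_1)\cap V(x_2)$ guarantees that the segments $x_1y^*$ and $x_2y^*$ cross $uv$, but not that they are edges of $P$: a blocking point of $P$ can lie on the portion of $x_iy^*$ above the line $L$, inside the (possibly non-empty) triangle $x_iuv$. Concretely, take $u=(0,0)$, $v=(4,0)$, $x_1=(1,4)$, $x_2=(3,4)$, $y=(2,-4)$, $w=(2,-\tfrac12)$, and let $z=(\tfrac{13}{9},2)$ be the point of segment $x_1w$ at height $2$. Then $x_1y,\,x_2y\in Y$ and $B=\{w\}$, but $x_1w$ contains $z$, so $x_1w\notin Y$; hence no $y^*\in B$ is $Y$-adjacent to both $x_1$ and $x_2$. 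Your proposed repairs do not close this gap: splitting $x_1y^*$ at the first blocking point replaces the endpoint $x_1$ by a different point, so the output is no longer a common neighbour of $x_1$ and $x_2$; and projecting a hidden neighbour $w_{i\pm1}$ into $A$ via your one-sided lemma yields a point that need not be $Y$-adjacent either to the replacement of $w_i$ or to $w_{i\pm2}$, so the rerouted sequence is not even a walk in $H_Y$.

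This blocking issue is exactly why the paper projects \emph{both} endpoints of every crossing edge simultaneously. Note that your lemma does hold under the stronger hypothesis $x_1,x_2\in A$: then each segment $x_iy^*$ lies in the union of the two empty triangles $x_iuv$ and $y^*uv$, so it is automatically unblocked and in fact lies in $Z$. The paper's proof takes a shortest $f$--$g$ path in $G_Y$ and replaces each endpoint $p$ of each path edge by a $P$-point of triangle $puv$ closest to $L$ (so every replacement lands in $A$ or $B$, and shared endpoints of consecutive path edges are replaced consistently, thanks to the shortest-path alternation); it then checks that each projected segment still crosses $uv$ and is an edge, which follows because both of its triangles are now empty. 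To salvage your bipartite-walk framework you would need to do the analogous symmetric projection of every vertex of the walk into $A\cup B$ and verify that each walk edge survives as a $Z$-edge, rather than rerouting through common neighbours one bad vertex at a time.
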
  

\begin{proof}
The forward direction 
is clear since $G_Z$ is an induced subraph of $G_Y$.

For the other direction, suppose $f$ and $g$ are connected in $G_Y$. 
Suppose, $f=f_1,f_2,...,$ $f_m=g$ is a shortest path in $G_Y$ between $f$ and $g$. If all the $f_i$'s are in $Z$, then $f$ and $g$ are connected in $G_Z$.
Otherwise, we will modify the path to replace edges of $Y$ by edges of $Z$.
Suppose $f_i = p_i q_i$ with $p_i$ above $uv$ and $q_i$ below $uv$. 
Because the path is shortest, the common points between successive edges are alternately above and below the line $L$ through $uv$.  In particular, suppose without loss of generality that $p_1 = p_2, q_2 = q_3, \ldots , p_{2i-1} = p_{2i}, q_{2i} = q_{2i+1}, \ldots$. 

The plan is to replace $p_i$ by some point $p'_i$ and replace $q_i$ by some point $q'_i$ so that the segments $f'_i = p'_i q'_i$ are edges in $Z$ and form a path connecting $f$ to $g$.

\begin{figure}[ht]
    \centering
    \includegraphics[scale=0.6]{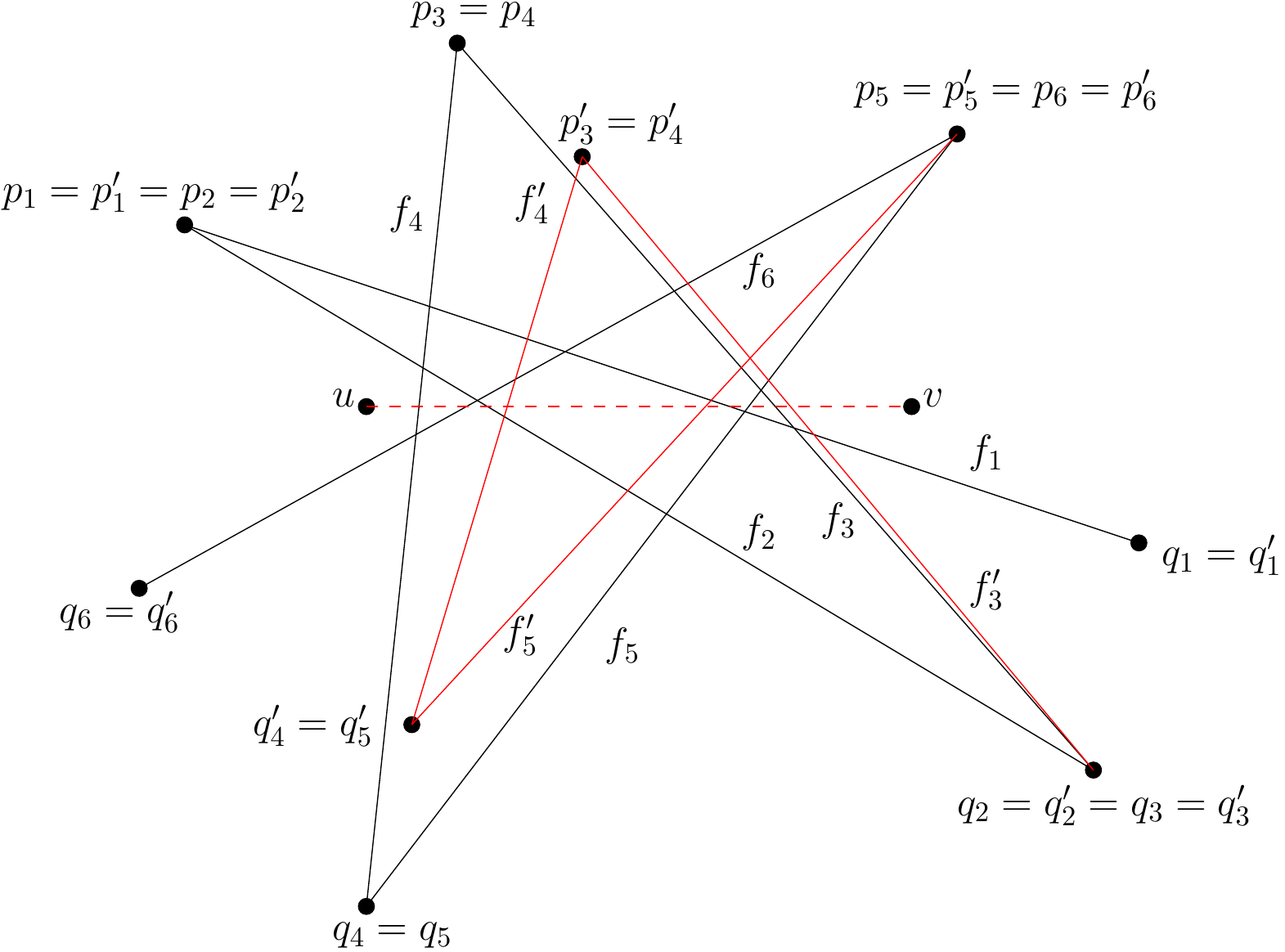}
    \caption{Proof of Lemma \ref{lemma:G_ZconnectedG_Y}.
    The original path in $G_Y$ from $f_1 \in Z$ to $f_6 \in Z$ is $f_1, \ldots, f_6$ (shown in black), and the modified path replaces the edges of $Y-Z$, which are $f_3, f_4, f_5$ by the edges $f'_3, f'_4, f'_5 \in Z$  (shown in red).
    }
    \label{fig:lemma6}
\end{figure}

If triangle $p_i uv$ is empty, then $p'_i = p_i$; otherwise 
$p'_i$ is a point inside triangle $p_i uv$ that is closest to line $L$.  Define $q'_i$ similarly with respect to triangle $q_i uv$.
Note that $f'_1 = f_1$ and $f'_m = f_m$.

We claim that each $f'_i$ is in $Z$.  We must show that the line segment $p'_i q'_i$ is an edge, that it crosses $e=uv$, and that the quadrilateral $Q(f'_i)$ is empty.

First note that triangles $p'_i uv$ and $q'_i uv$ are empty because we picked $p'_i$ and $q'_i$ closest to line $L$.  Next we claim that these two empty triangles together form a convex quadrilateral.  This is because $f_i$ crosses $uv$, so $Q(f_i)$ is convex, and when we move $p_i$ to $p'_i$ and $q_i$ to $q'_i$, convexity is preserved.  Thus $f'_i$ is an edge of $Z$.

Finally, note that $p'_{2i-1} = p'_{2i}$ and $q'_{2i} = q'_{2i+1}$ because that was true of the original points.
Thus the edges $f'_i$ form a path in $G_Z$ connecting $f$ and $g$.
\end{proof}

\begin{lemma}
\label{lemma:YcontainZ}
Every connected component of $G_Y$ contains an edge of $Z$.
\end{lemma}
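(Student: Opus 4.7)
The plan is to prove the lemma by strong induction on $n_f := |Q(f) \cap P|$, where $f = pq \in Y$ is an arbitrary edge in the given connected component $C$ of $G_Y$, with $p$ above line $L$ and $q$ below.

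For the base case $n_f = 4$, the quadrilateral $Q(f)$ contains only its four vertices $p, u, q, v$, so $Q(f)$ is an EC4 and $f \in Z \cap C$.

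For the inductive step $n_f \geq 5$, at least one of the triangles $puv$ or $qvu$ is non-empty (otherwise $Q(f)$ would already be EC4). WLOG assume triangle $puv$ is non-empty, and pick $r \in P \cap \text{triangle } puv \setminus \{u,v\}$ closest to line $L$ (with some tie-breaking). By the same argument as in the proof of Lemma~\ref{lemma:G_ZconnectedG_Y}, triangle $ruv$ is empty, so $r \in A$, and the convexity-preservation argument shows that segment $rq$ still crosses $e$. The goal will be to exhibit an edge $f' \in Y$ in the same component $C$ as $f$ with $n_{f'} < n_f$, after which the inductive hypothesis yields a $Z$-edge in $C$.

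The natural candidate is $f' = rq$. Its above-$L$ portion lies inside the empty triangle $ruv$, so contains no $P$-points; its below-$L$ portion lies in triangle $qvu$. When $q \in B$ (or more generally when segment $rq$ has no other $P$-points), $rq$ is an edge of $P$, it lies in $Y$, it shares endpoint $q$ with $f$ so $rq \in C$, and $Q(rq) \subsetneq Q(f)$ since $p \notin Q(rq)$ (because $r$ is strictly closer to $L$ than $p$, making $p$ lie outside triangle $ruv$). This gives $n_{rq} \leq n_f - 1$, and IH applies.

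The hard case is when segment $rq$ has $P$-points on its below-$L$ portion (so $q \notin B$). Let $q^*$ be the $P$-point on segment $rq$ closest to $r$; then $rq^*$ is an edge of $P$ in $Y$, and again $Q(rq^*) \subsetneq Q(f)$ so $n_{rq^*} < n_f$. The main obstacle will be to verify $rq^* \in C$. The plan is to produce a short path in $G_Y$ from $f = pq$ to $rq^*$: since $q^*$ lies in $Q(f)$ on the opposite side of $L$ from $p$, the segment $pq^*$ lies in $Q(f)$ (by convexity) and crosses $e$, so $pq^*$ lies in $Y$ whenever it is an edge of $P$, yielding the two-step path $pq \to pq^* \to rq^*$ via shared endpoints $p$ and $q^*$. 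In general position this always succeeds; in the presence of collinear points where $pq^*$ may fail to be an edge, a more careful iterated reduction (repeatedly applying the "first $P$-point on the segment" construction and exploiting the fact that $Q(f)$ contains only finitely many $P$-points) produces a valid intermediate path and is the technical crux of the proof.
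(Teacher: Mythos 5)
The paper's proof and yours take genuinely different routes. The paper takes a triangulation $T$ containing $f$, considers a flip sequence from $T$ to any triangulation containing $e$, observes that the last flip must turn some edge $g \in Z$ into $e$, and invokes Theorem~\ref{thm:new-characterization-2} on the part of the sequence staying in ${\cal F}_{-e}$ to connect $f$ and $g$ in $G_Y$. You instead attempt the direct geometric construction --- induct on $|Q(f)\cap P|$, pull $p$ down to the closest $P$-point $r$ in triangle $puv$, and argue that a new edge of $Y$ with a strictly smaller quadrilateral lives in the same $G_Y$-component as $f$. The paper explicitly flags this alternative (``take an edge $f\in Y$ and construct the edge $f'\in Z$ inside $Q(f)$'') and rejects it because ``showing that $f$ and $f'$ are connected in $G_Y$ runs into some complications due to the possibility of collinear points.'' You have run into exactly those complications.

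In general position your argument is sound: $r\in A$, $ruqv$ is strictly convex so $rq$ crosses $e$, and the two-step $G_Y$-path $pq \to pq^* \to rq^*$ works because $pq^*$ is automatically an edge of $P$. But the lemma must hold for all point sets, including those with collinear triples --- and the paper's own applications (grid point sets, point sets without EC5s) unavoidably have them. When $pq^*$ contains intermediate $P$-points, the proposed path collapses: $pq^*\notin Y$, and the one sub-segment of $pq^*$ that does cross $uv$ need not share an endpoint with either $pq$ or $rq^*$, so it does not obviously give a $G_Y$-path. Your closing sentence concedes this is ``the technical crux of the proof'' and hands it off to a vaguely described ``iterated reduction'' that is never specified, let alone verified to terminate or to produce adjacencies in $G_Y$. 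That is the missing content; as written, the proof is incomplete precisely in the case the paper's authors chose a different argument to avoid.
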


This lemma can be proved in several different ways.  One possibility is to 
take an edge $f \in Y$ and construct the edge $f' \in Z$ inside $Q(f)$ as in the above proof. 
However, showing that $f$ and $f'$ are connected in $G_Y$ runs into some complications due to the possibility of collinear points. 

Instead we take an edge $f \in Y$ and consider a triangulation $T$ containing $f$.  
Below we give a short proof based on this idea, and in Section~\ref{sec:flip-cut-edge-algorithm} we give an alternative algorithmic proof that efficiently finds an edge of $Z$ connected to $f$ in $G_Y$. 

\begin{proof}
Let $f$ be an edge of $Y$
and let $T$ be a triangulation that contains $f$.  
We will prove there is an edge $g \in Z$ that is connected to $f$ in $G_Y$. 
There is a flip sequence from $T$ to a triangulation that contains $e=uv$. The last flip in this sequence must involve an edge $g$ of $Z$ flipping to $e$.
Until the last flip, we are in one connected component of ${\cal F}_{-e}$, so, by Theorem~\ref{thm:new-characterization-2}, all the edges of $Y$ in all the trianglulations in the sequence are in the same connected component of $G_Y$.  Thus $f$ and $g$ are connected in $G_Y$.
\end{proof}

\begin{proof}[Proof of Theorem~\ref{thm:Z-characterization}]
If $e$ is a flip cut edge, then by Theorem~\ref{thm:new-characterization-2}, $G_Y$ is disconnected.  By Lemma~\ref{lemma:YcontainZ}, every connected component of $G_Y$ contains an edge of $G_Z$.  Thus $G_Z$ is disconnected.

In the other direction, if $G_Z$ is disconnected, then by Lemma~\ref{lemma:G_ZconnectedG_Y}, $G_Y$ is disconnected, so $e$ is a flip cut edge by Theorem~\ref{thm:new-characterization-2}.
\end{proof}

In the following section we give efficient algorithm that uses $G_Z$ to test if edge $e$ is a flip cut edge, and, if so, to identify when two given triangulations are in different components of ${\cal F}_{-e}$.

\begin{figure}[ht]
    \centering
    \includegraphics[width=.6\textwidth ]{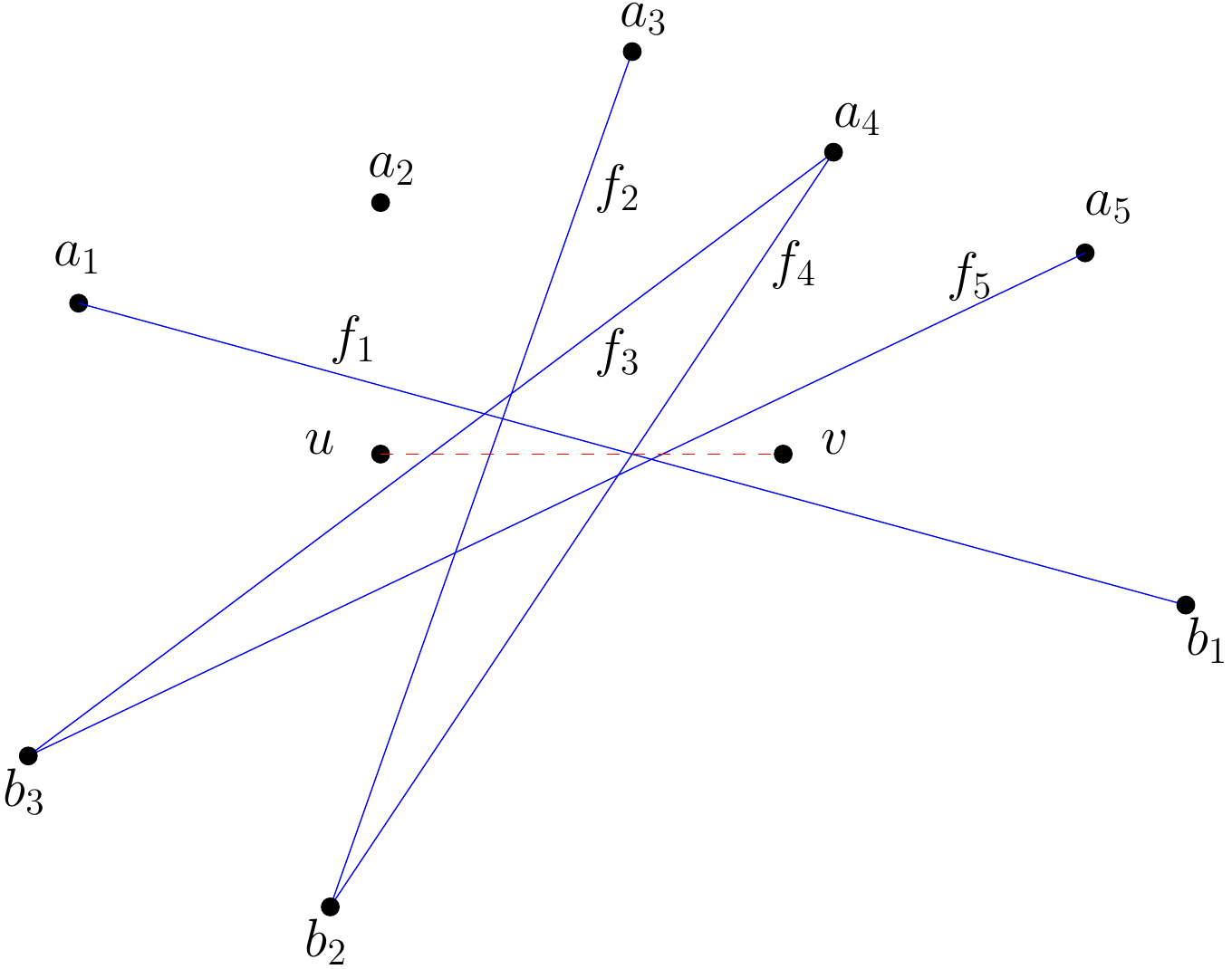}
    \caption{The points that form empty triangles with $uv$ are $A = a_1,  \ldots, a_5$ and $B = b_1, b_2, b_3$ ordered cyclically around $u$. The edges of $Z$ are $f_1, \ldots, f_5$ and they form two connected components in $G_Z$, $\{f_1\}$ and $\{f_2,\ldots ,f_5\}$. 
For any point $b \in B$, the points $a \in A$ such that $ab \in Z$ form a subinterval of $A$.  }
    \label{fig:order}
\end{figure}

\subsection{Algorithms for Flip Cut Edges}
\label{sec:flip-cut-edge-algorithm}

In this section we give an $O(n \log n)$ time algorithm to test if a given edge $e$ is a flip cut edge.  
Then, with that algorithm as a preprocessing step, 
we give an $O(n)$ time algorithm to test if two triangulations in ${\cal T}_{-e}$ are connected in ${\cal F}_{-e}$.
We also show how to find all flip cut edges in time $O(n^3)$ (note that there may be $\Theta(n^2)$ flip cut edges). 

\subsubsection{Testing for a Flip Cut Edge}
To test if edge $e$ is a flip cut edge, 
it suffices to test if $G_Z$ is connected by Theorem~\ref{thm:Z-characterization}.  
The algorithm also identifies  the connected components of $G_Z$ (though without explicitly listing the elements of $Z$, since there can be $\Theta(n^2)$ of them).
In particular, we find disjoint subsets $A_1, \ldots , A_c$ of $A$, and $B_1, \ldots, B_c$ of $B$ such that the $i$th connected component of $G_Z$ consists of the edges of $Z$ between $A_i$ and $B_i$.  
Note that $Z$ will not in general contain all pairs from $A_i \times B_i$.

We first need  some more properties of the sets $A$ and $B$ and the graph $G_Z$.
Because no empty triangle is contained in another, 
the ordering of $a \in A$ by decreasing (convex) angle $\angle auv$ is the same as the ordering by increasing (convex) angle $\angle avu$. 
Let $a_1, \ldots , a_k$ be this ordering of $A$.  Similarly, let $b_1, \ldots, b_l$ be the ordering of $B$ by decreasing (convex) angle $\angle bvu$, or equivalently, by increasing angle $\angle buv$.  Thus the cyclic order of $A \cup B$ around $u$ is $a_1, \ldots, a_k, b_1, \ldots, b_k$.  
See Figure~\ref{fig:order}.

\begin{observation}
For any point $b_j$, the set of points $a_i$ such that the edge $a_i b_j$ is in $Z(e)$ form a subinterval of the ordering of $A$.  And similarly for $a_i$.
\label{obs:consecutive-neighbours}
\end{observation}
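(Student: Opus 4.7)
The plan is to reduce the condition $a_ib_j\in Z$ to two independent angular inequalities, one at $u$ and one at $v$, and then to exploit the fact that the ordering of $A$ is monotone in both of these angles.

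\textbf{Step 1 (reduction to a crossing condition).} For $a_i\in A$ and $b_j\in B$, I would first argue that $a_ib_j\in Z$ iff the segment $a_ib_j$ crosses the relative interior of segment $uv$. If it does cross, then the two empty triangles $a_iuv$ and $b_juv$ join along the common edge $uv$ into a convex quadrilateral $Q=ua_ivb_j$ whose interior and boundary lie in the union of these two empty triangles, so $Q$ is an EC4 and $a_ib_j\in Z$. Conversely, if $Q$ is a convex quadrilateral, then $a_ib_j$ and $uv$ are its two diagonals and therefore cross.

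\textbf{Step 2 (double-cone characterization).} Let $H_u$ be the open half-plane bounded by the line $b_ju$ that contains $v$, and let $H_v$ be the open half-plane bounded by the line $b_jv$ that contains $u$. Since $a_i$ is strictly above line $uv$ and $b_j$ strictly below, the segment $a_ib_j$ meets the relative interior of $uv$ precisely when $a_i$ lies in the open double cone $H_u\cap H_v$ from $b_j$ through $uv$.

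\textbf{Step 3 (angular reformulation).} Let $\beta_u\in(0,\pi)$ be the angle that the ray from $u$ opposite to $b_j$ makes with ray $uv$, measured inside the upper half-plane, and define $\beta_v\in(0,\pi)$ symmetrically at $v$ from ray $vu$. Then for any point $a$ above line $uv$, we have $a\in H_u$ iff $\angle auv<\beta_u$, and $a\in H_v$ iff $\angle avu<\beta_v$, because in each case the bounding line passes through the apex and cuts off an angular sector in the upper half-plane containing $v$ (resp.\ $u$).

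\textbf{Step 4 (apply the ordering).} By the paragraph preceding this observation, the $a_i$ are ordered by strictly decreasing $\alpha_i:=\angle a_iuv$ and simultaneously by strictly increasing $\gamma_i:=\angle a_ivu$ (strictness comes from emptiness: two points with the same angle at $u$ lie on a common ray from $u$, forcing one empty triangle $auv$ to contain the other). Hence $\{i:\alpha_i<\beta_u\}$ is a suffix of $\{1,\dots,k\}$, $\{i:\gamma_i<\beta_v\}$ is a prefix, and the set of indices $i$ with $a_ib_j\in Z$, being their intersection, is a contiguous subinterval. The symmetric claim (fixing $a_i$ and varying $b_j$) follows by interchanging the roles of $A$ and $B$. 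I expect Step~2 to be where the most care is needed, as one must carefully verify the double-cone characterization; the remainder is routine angular bookkeeping.
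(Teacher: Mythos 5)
Your proposal is correct. Note that the paper states Observation~\ref{obs:consecutive-neighbours} without any proof, treating it as immediate from the preceding setup, so there is no official argument to match; your write-up supplies one, and it runs along the same lines as the paper's proof of the stronger Observation~\ref{obs:Z-edges} (which establishes two-sided monotonicity by stepping through consecutive indices, whereas you use fixed angular thresholds). Your Step~1 is essentially already asserted in the paper when it writes $Z = Y \cap (A \times B)$, i.e., that for $a\in A$, $b\in B$ membership in $Z$ reduces to the segment crossing $uv$; your Steps~2--4 then correctly convert the crossing condition into the two independent inequalities $\angle a_i u v < \beta_u$ and $\angle a_i v u < \beta_v$ and intersect a suffix with a prefix of the bi-monotone ordering of $A$. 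Two minor remarks: the set $H_u \cap H_v$ is a single open wedge with apex $b_j$ (the two bounding lines meet at $b_j$), not a double cone --- a naming slip only, since your argument uses the set exactly as defined; and the strict inequalities are the right choice, since equality (e.g.\ $\angle a_i u v = \beta_u$) corresponds to $a_i$, $u$, $b_j$ collinear, in which case $a_i b_j$ is not even an edge of $P$ and is correctly excluded from $Z$.
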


Later on, we will find it useful to have an even stronger property:

\begin{observation}
\label{obs:Z-edges}
Let $L$ be the line through $uv$.
\begin{enumerate}
\item If $a_i b_j$ crosses $L$ to the right of $v$ then the same is true for 
all $a_{i'} b_{j'}$, $i' \ge i, j' \le j$.
\item If $a_i b_j$ crosses $L$ to the left of $u$ then the same is true for 
all $a_{i'} b_{j'}$, $i' \le i, j' \ge j$.
\item 
If $a_{i_1} b_{j_2}$ and $a_{i_2} b_{j_1}$ are in $Z$ for some $i_i \le i_2$ and $j_1 \le j_2$ then $a_i b_j$ is in $Z$ for all $i_1 \le i \le i_2$ and all $j_1 \le j \le j_2$.
\end{enumerate}
\end{observation}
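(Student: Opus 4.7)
The plan is to reduce all three items to a single angle-sum inequality that is manifestly monotone in the index pair $(i,j)$. Concretely, I aim to establish the following characterization: segment $a_i b_j$ crosses line $L$ strictly to the right of $v$ if and only if $\angle a_i v u + \angle b_j v u > \pi$, passes through the point $v$ if and only if this sum equals $\pi$, and otherwise crosses $L$ strictly on the $u$-side of $v$. Analogously, $a_i b_j$ crosses $L$ strictly to the left of $u$ iff $\angle a_i u v + \angle b_j u v > \pi$. This can be proved by an orientation test on the sign of the cross product $(a_i - v)\times(b_j - v)$, or equivalently by observing that the segment passes on the opposite side of $v$ from $u$ precisely when the two rays $v a_i$ and $v b_j$ together span more than $\pi$ of the directions around $v$ measured off the ray $vu$.

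Items 1 and 2 will then be immediate consequences of the orderings. By the definition of the orderings on $A$ and $B$, $\angle a_i v u$ is strictly increasing in $i$ and $\angle b_j v u$ is strictly decreasing in $j$, so the sum $\angle a_i v u + \angle b_j v u$ is non-decreasing in $i$ and non-increasing in $j$; hence if the sum exceeds $\pi$ at $(i,j)$, it still exceeds $\pi$ at every $(i',j')$ with $i' \ge i$ and $j' \le j$. This is Item 1. The argument for Item 2 is symmetric, using that $\angle a_i u v$ is decreasing in $i$ and $\angle b_j u v$ is increasing in $j$.

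For Item 3, the hypothesis $a_{i_1} b_{j_2},\, a_{i_2} b_{j_1} \in Z$ means that both segments cross segment $uv$ in its strict interior, giving the strict inequalities $\angle a_{i_2} v u + \angle b_{j_1} v u < \pi$ and $\angle a_{i_1} u v + \angle b_{j_2} u v < \pi$. By the monotonicities above, the maximum of $\angle a_i v u + \angle b_j v u$ over the rectangle $i_1 \le i \le i_2,\ j_1 \le j \le j_2$ is attained at $(i_2, j_1)$, and the maximum of $\angle a_i u v + \angle b_j u v$ is attained at $(i_1, j_2)$, so both strict inequalities propagate to every $(i,j)$ in the rectangle, and each $a_i b_j$ in the rectangle crosses $uv$ strictly in its interior. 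To finish, I must show that $a_i b_j$ is in fact an edge of $P$ and that $Q(a_i b_j)$ is empty. Any point of $P$ on the open segment $a_i b_j$ must lie either in the empty triangle $a_i u v$ (if above $L$), in the empty triangle $b_j u v$ (if below $L$), or strictly between $u$ and $v$ on $L$---and the last is impossible because $uv$ is itself an edge of $P$. So $a_i b_j$ is an edge, and $Q(a_i b_j)$ is the union of two empty triangles joined along the diagonal $uv$, hence an EC4, and therefore $a_i b_j \in Z$.

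The main obstacle is the careful handling of degenerate cases: segments $a_i b_j$ that might pass through $u$ or $v$, or that might contain some third point of $P$. The first is avoided by propagating the strictness of the two corner inequalities throughout the index rectangle via the monotonicity; the second is handled by the empty-triangle property of $A$ and $B$ together with the fact that $uv$ is already an edge of $P$.
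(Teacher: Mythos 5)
Your proof is correct and follows essentially the same route as the paper's: both arguments rest on the angle orderings of $A$ and $B$ at $u$ and $v$ (the paper phrases items 1--2 as a step-by-step induction on the convex angle at $v$, you phrase it as monotonicity of the angle sums $\angle a_i v u + \angle b_j v u$, which is the same idea), and item 3 is deduced from items 1 and 2. You are in fact more careful than the paper on item 3, which the paper dismisses with ``follows from (1) and (2)'': you explicitly verify that a segment $a_i b_j$ crossing $uv$ strictly in its interior is an edge of $P$ with $Q(a_i b_j)$ an EC4, and you handle the degenerate case of passing through $u$ or $v$ via the strict corner inequalities.
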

\begin{proof} It suffices to prove (1), since (2) is symmetric, and (3) follows from (1) and (2).
To prove (1), suppose $a_i b_j$ crosses $L$ to the right of $v$.  Then the angle $\angle a_i v b_j$ is convex to the right of $v$, so (by the ordering) the same is true of $\angle a_{i+1} v b_j$ and $\angle a_i b_{j-1}$.  The result follows by induction. 
\end{proof}

\medskip\noindent{\bf Finding the ordered sets $A$ and $B$.}
We show how to find the ordered set $A$ in $O(n \log n)$ time. 
Sort all the points $p \in P$ that lie above $e$ by decreasing angle $\angle puv$, breaking ties by distance from $u$.  Call this the \emph{$u$-order}, $<_u$.
Also sort the same points by increasing angle $\angle pvu$, and call this the \emph{$v$-order}, $<_v$.
Define $a_1$ be 
the first point in the $v$-order.  Observe that $a_1$ is a member of $A$ (the first member of $A$), and that any points $p$ with $p <_u a_1$ do NOT belong to $A$ and can be discarded.  
In general, we proceed through the $v$-order.  Define $a_i$ to be the next un-discarded element of the $v$-order, and discard all points $p$ with $p <_u a_i$. 

To prove that this is correct, observe that when we discard $p$,
we have $p <_u a_i$ and $p >_v a_i$, so the triangle 
$puv$ contains $a_i$, so discarding $p$ is correct. And observe that when we choose $a_i$, we have $a_{i-1} <_v a_{i}$ because we follow the $v$-order, and we have $a_{i-1} <_u a_i$ because we discarded all points $p <_u a_{i-1}$ in the previous step.

We can find the ordered set $B$ similarly. The time to find the ordered sets $A$ and $B$ is $O(n \log n)$, and in fact, it is $O(n)$ after the preliminary sorting steps.

\medskip\noindent{\bf Finding the connected components of $G_Z$.}
Given the ordered lists $A$ and $B$, we find the connected components of $G_Z$ in linear time as follows. 
The algorithm has two phases.  In Phase 1, we find the first edge of the next component, and in Phase 2, we complete the component.
Phase 1 initially begins with 
$a_1$ and $b_1$, but more generally, we will find the first edge of the next component $A_c, B_c$ among the ``active'' points $a_i, \ldots, a_k$ and $b_j, \ldots, b_\ell$, maintaining the invariant that there are no edges of $Z$ from an inactive point to an active point. We search for an edge that crosses $uv$.  Let $L$ be the line through $uv$.  If $a_i b_j$ crosses $L$ to the right of $v$, then
by Observation~\ref{obs:Z-edges}(1), there are no edges of $Z$ from $b_j$ to any active point.
So we increment $j$ and the invariant is maintained. 
If $a_i b_j$ crosses $L$ to the left of $u$, then, with similar justification, we increment $i$. 
Continue until $a_i b_j$ crosses $uv$. This completes Phase 1---we have found the first edge of the connected component.  Add $a_i$ to $A_c$ and add $b_j$ to $B_c$.

For Phase 2, we complete the connected component by
alternately ``pivoting'' on $a_i$ and $b_j$.  To pivot on $a_i$, update $j$ to the maximum index such that $a_i b_j$ crosses $uv$, putting all the $b$ points we find into $B_c$.  To pivot on $b_j$, update $i$ to the maximum index such that $a_i b_j$ crosses $uv$, putting all the $a$ points we find into $A_c$. 
When no more pivots are possible, we are done with this connected component and done with Phase 2.  We increment $i$ and $j$ by 1, and go back to Phase 1 to find the next connected component.

More details are given as Algorithm~\ref{algorithm:selection}.  The algorithm runs in linear time since it only performs a linear scan through each of the ordered lists $A$ and $B$, doing constant work for each point.

We now justify correctness of the algorithm.  We already argued that during Phase 1 we maintain the invariant that there are
no edges of $Z$ between an inactive point and an active point, i.e., 
there are no edges of $Z$ of the form $a_{i'} b_{j'}$ with $i' \le i$ and $j' >j$ or with $i' > i$ and $j' \le j$.  
Now consider Phase 2.
Observe that when we add points to $A_c$ and $B_c$, they are part of the same connected component of $G_Z$.  This is because when we pivot on $a_i$, all the $b$ points that we add to $B_c$ are adjacent to $a_i$ in $Z$, and similarly, when we pivot on $b_j$, all the $a$ points that we add to $A_c$ are adjacent to $b_j$ in $Z$.  It remains to show that when we declare a connected component ``done'' because we can no longer pivot at $a_i$ or $ b_j$, 
then the invariant holds for active points $a_{i+1}, \ldots, a_k$ and $b_{j+1}, \ldots, b_\ell$, 
i.e.,  there are no edges of $Z$ of the form $a_{i'} b_{j'}$ with $i' \le i$ and $j' >j$ or with $i' > i$ and $j' \le j$. First suppose $a_{i'}b_{j'} \in Z$ for some $i' \le i$ and $j' >j$.  
Then by Observation~\ref{obs:Z-edges}(3), $a_i b_{j+1}$ is in $Z$, so we would not have been done pivoting at $a_i$. 
The case of $i' > i$ and $j' \le j$ follows by symmetry.

\begin{algorithm}[htbp]
    \caption{Connected components of $G_Z$}
    \label{algorithm:selection}
\hspace*{\algorithmicindent} \textbf{Input:} $u,v, a_1, \ldots , a_k, b_1, \ldots, b_\ell$\\
\hspace*{\algorithmicindent} \textbf{Output:} Connected components of $G_Z$ in the form of $A_1, \ldots, A_c, B_1, \ldots, B_c$
\begin{algorithmic}[1]
\Procedure{ConnctedComponents}{}
\State for all $i$, 
{\tt done[$a_i$]} $\leftarrow$ False; 
for all $j$, 
{\tt done[$b_j$]} $\leftarrow$ False 
\State $i \leftarrow 1$;   $j \leftarrow 1$; $c \leftarrow 0$

\While{$i \le k$ and $j \le \ell$} 
    \State {\color{blue} // Phase 1. Find the first edge of  component $c$}
    
    \While {$a_i b_j$ does not cross $uv$}
        \If{$a_ib_j$ crosses $L$ to the right of $v$} 
            \State $j \leftarrow j+1$; {\bf if}
            $j > \ell$ {\bf then Halt}
        \EndIf
        \If{$a_ib_j$ crosses $L$ to the left of $u$} 
            \State $i \leftarrow i+1$; {\bf if}
            {$i > k$} {\bf then Halt}
        \EndIf
    \EndWhile

    \State $c \leftarrow c + 1$;
Insert $a_i$ into $A_c$, and $b_j$ into $B_c$
    \State {\color{blue}// Phase 2. Find all edges in  component $c$ by alternately pivoting on $a_i$, $b_j$ until no further pivot is possible}
    \Repeat 
        \State {\color{blue} // Pivot on $a_i$}
        \While{$a_ib_j$ crosses $uv$ and $j \le \ell$}
            \State insert $b_j$ into $B_c$;
            $j \leftarrow j+1$
        \EndWhile
        \State $j \leftarrow j - 1$;
        {\tt done[$a_i$] $\leftarrow$ True}
        \State {\color{blue} // Pivot on $b_j$}
        \While{$a_ib_j$ crosses $uv$ and $i \le k$}
            \State insert $a_i$ into $A_c$;
            $i \leftarrow i+1$
        \EndWhile
         \State $i \leftarrow i - 1$;
        {\tt done[$b_j$] $\leftarrow$ True}
    \Until{{\tt done[$a_i$]} and {\tt done[$b_j$]}}
    \State{$i \leftarrow i+1$; $j \leftarrow j+1$}
\EndWhile

\EndProcedure
\end{algorithmic}
\end{algorithm}

\subsubsection{Finding All Flip Cut Edges}

To find all the flip cut edges, we run the above test on each of the $O(n^2)$ edges.  We preprocess by sorting the points cyclically around each point $p$ in a total of $O(n^2 \log n)$ time.  Then, to test a particular edge $e = uv$, we break the orderings around $u$ and around $v$, and apply Algorithm~\ref{algorithm:selection}, which takes linear time apart from the sorting.  Thus the total time to find all flip cut edges is $O(n^3)$.

\subsubsection{Testing Connectivity of Two Triangulations}
We now show how to test if two triangulations $T_1$ and $T_2$ in ${\cal T}_{-e}$ are connected in ${\cal F}_{-e}$. We assume that we have the output of the above algorithm, i.e., the sets $A_1, \ldots, A_c$ and $B_1, \ldots, B_c$ such that the $i$th connected component of $G_Z$ consists of the edges of $Z$ between $A_i$ and $B_i$. 
We can then find the component $i$ of a given edge $e \in Z$ in constant time.

As mentioned in Section~\ref{sec:characterization} one approach is to pick one edge $f_1$ from $Y \cap T_1$, and one edge $f_2$ from $Y \cap T_2$, and test if $f_1$ and $f_2$ are in the same connected component of $G_Y$.  However, we only have $G_Z$ available to us, and there are triangulations in ${\cal T}_{-e}$ that contain no edges of $Z$.

Instead, we give an algorithmic version of Lemma~\ref{lemma:YcontainZ} that finds an edge $g_1 \in Z$ in the same component of $G_Y$ as the edges $Y \cap T_1$, and an edge $g_2 \in Z$ in the same component of $G_Y$ as the edges $Y \cap T_2$. 
Then we simply test if $g_1$ and $g_2$ are in the same set $A_i \times B_i$.

We first establish correctness and then give the details of finding $g_1$ and $g_2$ in linear time.  We must prove that $g_1$ and $g_2$ are connected in $G_Z$ if and only if $T_1$ and $T_2$ are connected in ${\cal F}_{-e}$.  Let $f_i$ be an edge of $Y \cap T_i$.  By Observation~\ref{obs:Y-int-T-connected}, the edges of $Y \cap T_i$ are all connected in $G_Y$, so the choice of $f_i$ is arbitrary.  
Now $T_1$ and $T_2$ are connected in ${\cal F}_{-e}$ iff (by Theorem~\ref{thm:new-characterization-2}) $f_1$ and $f_2$ are connected in $G_Y$ iff (by choice of $g_1$, $g_2$) $g_1$ and $g_2$ are connected in $G_Y$ iff (by Lemma~\ref{lemma:G_ZconnectedG_Y}) $g_1$ and $g_2$ are connected in $G_Z$.

We now give the details of finding $g_1$ (finding $g_2$ is exactly similar).  
Triangulation  $T_1$ has a 
sequence $C$ of triangles that intersect $uv$. 
See Figure~\ref{fig:triangles-along-e}.
Each triangle in $C$ shares an edge of $Y \cap T_1$ with the previous triangle in $C$, and all the edges of $Y \cap T_1$ are in one connected component of $G_Y$ (by Observation~\ref{obs:Y-int-T-connected}).  
Among the vertices of triangles of $C$, let $a$ be a vertex above the line $L$ through $uv$ that is closest to $L$ and let $b$ be  a vertex below the line $L$ that is closest to $L$. Then $ab$ is an edge of $Z$ and is in the same connected component as $Y \cap T_1$. 
This can be done in linear time.

\begin{figure}[ht]
    \centering
    \includegraphics[scale=0.6]{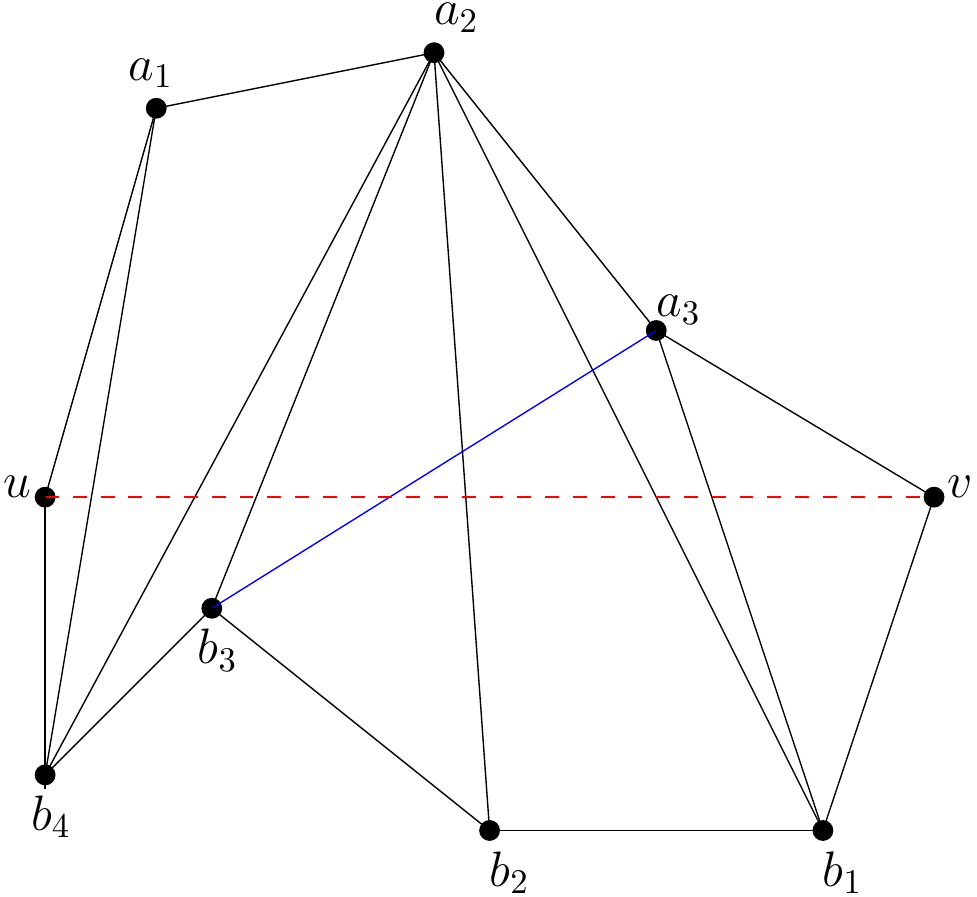}
    \caption{
    The triangles $C$ of $T_1$ that cross edge $e = uv$, 
and an edge $a_3 b_3$ of $Z$ (in blue) that is connected in $G_Y$ to the edges of $Y \cap T_1$. 
 }
    \label{fig:triangles-along-e}
\end{figure}

\subsection{Further Results on Flip Cut Edges}
\label{sec:some-bounds}

In this section we establish some bounds on the number of flip cut edges, and on the number of disconnected components caused by forbidding one flip cut edge.
To do this, we fill in the details of the claims made in Section~\ref{sec:examples} (Examples) about some special point sets: channels, hourglasses, and grid point sets.  We begin by relating flip cut edges to empty convex pentagons. 
Throughout the section we refer to a forbidden edge $e$, and to $Z$, $G_Z$, $A$, $B$, etc., as defined above.

\subsubsection{Empty Convex Pentagons (EC5's)}
\label{sec:no-EC5}

Empty convex pentagons (EC5's) play a significant role in the study of flip graphs.
Eppstein~\cite{eppstein} studied the flip graph of points without EC5's, and gave a polynomial time algorithm to find the minimum number of flips between two given triangulatioins.  Points on an $n \times m$ grid have no EC5's, and in the other direction, a point set without EC5's must have collinear points.  In particular,  any $n \ge 10$ points in general position contain an EC5~\cite{harborth1978konvexe}, and sufficiently large point sets with no EC5 have arbitrarily large subsets of collinear points~\cite{abel2011every}.
Empty convex pentagons also play a significant role in analyzing flips for coloured edges and proving the Orbit Theorem~\cite{lubiw2019orbit}.

We show that the absence of EC5's leads to a simple condition for flip cut edges.  In particular, if a point set has no EC5's, then 
edge $e$ is a flip cut edge if and only if $e$ is in (i.e., is a diagonal of) at least two EC4's. In fact it suffices to exclude local EC5's:

\begin{lemma}
\label{lemma:two-diagonals}
An edge $e$ that is not a diagonal of an EC5 is a flip cut edge if and only if it is a diagonal of at least two EC4's, i.e., $|Z| \ge 2$.    
\end{lemma}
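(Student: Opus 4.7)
The forward direction is immediate from Theorem~\ref{thm:Z-characterization}: if $e$ is a flip cut edge then $G_Z$ is disconnected and so $|Z|\ge 2$.

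For the converse, the plan is to show that under the hypothesis---$e$ is not a diagonal of any EC5---together with $|Z|\ge 2$, no two edges of $Z$ share an endpoint; then $G_Z$ has $|Z|\ge 2$ isolated vertices, hence is disconnected, so $e$ is a flip cut edge by Theorem~\ref{thm:Z-characterization}.  Suppose for contradiction that $h,h'\in Z$ share an endpoint.  By the symmetry between the two sides of $uv$, I may take the shared endpoint to be $a\in A$, so $h=ab$ and $h'=ab'$ with $b,b'\in B$; by Observation~\ref{obs:consecutive-neighbours} I may further assume $b=b_j$ and $b'=b_{j+1}$ are consecutive in the $B$-ordering.  The goal is to show that the 5-point set $S_0=\{u,v,a,b_j,b_{j+1}\}$, possibly after iterative shrinking, produces an EC5 with $uv$ as a diagonal, giving the desired contradiction.

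The first step is to verify that $S_0$ is in convex position with $uv$ as a diagonal.  Since $Q(h)$ and $Q(h')$ are convex EC4's with $uv$ as a diagonal, both interior angles $\angle auv+\angle b_juv$ and $\angle auv+\angle b_{j+1}uv$ at $u$ are less than $\pi$, so the angular span of $\{a,v,b_j,b_{j+1}\}$ seen from $u$ is less than $\pi$ and $u$ lies on the convex hull; the same argument works at $v$.  The remaining three points are extreme since $a$ lies above $uv$, both $b_j$ and $b_{j+1}$ lie below $uv$, and neither lies inside the other's EC3.  Because $a$ is above and $b_j,b_{j+1}$ are below, the cyclic order is $u,a,v,b_j,b_{j+1}$ and $uv$ is indeed a diagonal.

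If $S_0$ is empty it is already the desired EC5.  Otherwise pick any $p\in P$ strictly inside it.  Since $uav$, $uvb_j$, and $uvb_{j+1}$ are empty triangles, $p$ must lie in the central sub-triangle $b_jb_{j+1}c$ of the lower quadrilateral, where $c=ub_j\cap vb_{j+1}$; this sub-triangle is contained in the triangle $vb_jb_{j+1}$ formed by $b_j$ and its two pentagon-neighbours.  Replacing $b_j$ by $p$ therefore yields a new convex pentagon $S_1=\{u,v,a,p,b_{j+1}\}$, with $uv$ still a diagonal (cyclic order $u,a,v,p,b_{j+1}$), strictly contained in $S_0$, and with strictly fewer interior points of $P$.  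The crucial structural invariant---$a\in A$ and one lower vertex, namely $b_{j+1}$, is still in $B$, so $uav$ and $uvb_{j+1}$ remain EC3's---is preserved, so any interior point of $S_1$ must lie outside those two triangles and hence inside $vpb_{j+1}$, the triangle formed by the current non-$B$ lower vertex and its neighbours.  Iterating, each step strictly decreases the interior-point count, so after finitely many iterations we obtain an EC5 containing $u$ and $v$ with $uv$ as a diagonal, contradicting the hypothesis.  The main delicate point is maintaining the ``EC3 above, EC3 below on one side'' structural invariant through each shrinking step, which is exactly what guarantees that every interior point of the current pentagon sits in the triangle formed by the vertex being replaced and its two neighbours, so the iteration is well-defined.
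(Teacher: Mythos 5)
Your route genuinely differs from the paper's: instead of using connectivity of $G_Z$ plus Observation~\ref{obs:consecutive-neighbours} to produce two $Z$-edges sharing one endpoint and \emph{consecutive} at the other, and then killing any extra point in one shot (the extra point closest to the line through $uv$ would be a member of $A$ or $B$ strictly between two consecutive elements), you argue that no two edges of $Z$ can share an endpoint at all, and you replace the one-shot emptiness argument by an iterative shrinking of the pentagon. Most of the iteration is sound: $S_0$ is a convex pentagon with $uv$ as a diagonal; a point of $P$ strictly inside must lie strictly inside the ear triangle at the non-$B$ lower vertex; the replacement does give a convex pentagon again, because the chord joining the replaced vertex's two neighbours separates the new point from $\mathrm{conv}(u,a,v,b_{j+1})$ (equivalently, a point of $P$ cannot lie in the union of the empty triangles $uav$ and $uvb_{j+1}$), while the other four points remain extreme since they are vertices of the old pentagon; and the number of points strictly inside strictly decreases, so the process terminates.

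The gap is in what termination certifies. Your stopping condition is ``no point of $P$ \emph{strictly inside} the pentagon,'' but the paper's EC5, like its EC4, requires no other point of $P$ inside \emph{or on the boundary}. Three of the pentagon's edges are automatically clean because they bound the empty triangles $uav$ and $uvb_{j+1}$, but nothing in your invariant excludes a point of $P$ lying on one of the two edges incident to the last replaced vertex (already in $S_0$, a point of $P$ could sit on the open segment $b_jb_{j+1}$ outside both empty quadrilaterals). In that situation your final pentagon is not an EC5, so the contradiction with ``$e$ is not a diagonal of an EC5'' is never reached. This is not a negligible degenerate case: the lemma is aimed exactly at point sets without EC5's, which necessarily contain collinear points, and the paper explicitly does not assume general position. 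The paper's proof sidesteps this because consecutiveness of the two neighbours in the $A$/$B$ order rules out \emph{any} extra point of the pentagon, interior or boundary, at once. To repair your argument you would either have to run the iteration over all non-vertex points of $P$ in the \emph{closed} pentagon (and then re-establish termination, since the replacing point now lies on the boundary rather than strictly inside), or finish with an extremal argument of the paper's type; as written, the proof is incomplete precisely in the collinear configurations the lemma is designed for.
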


\begin{figure}[ht]
    \centering
    \includegraphics[width=.4\textwidth ]{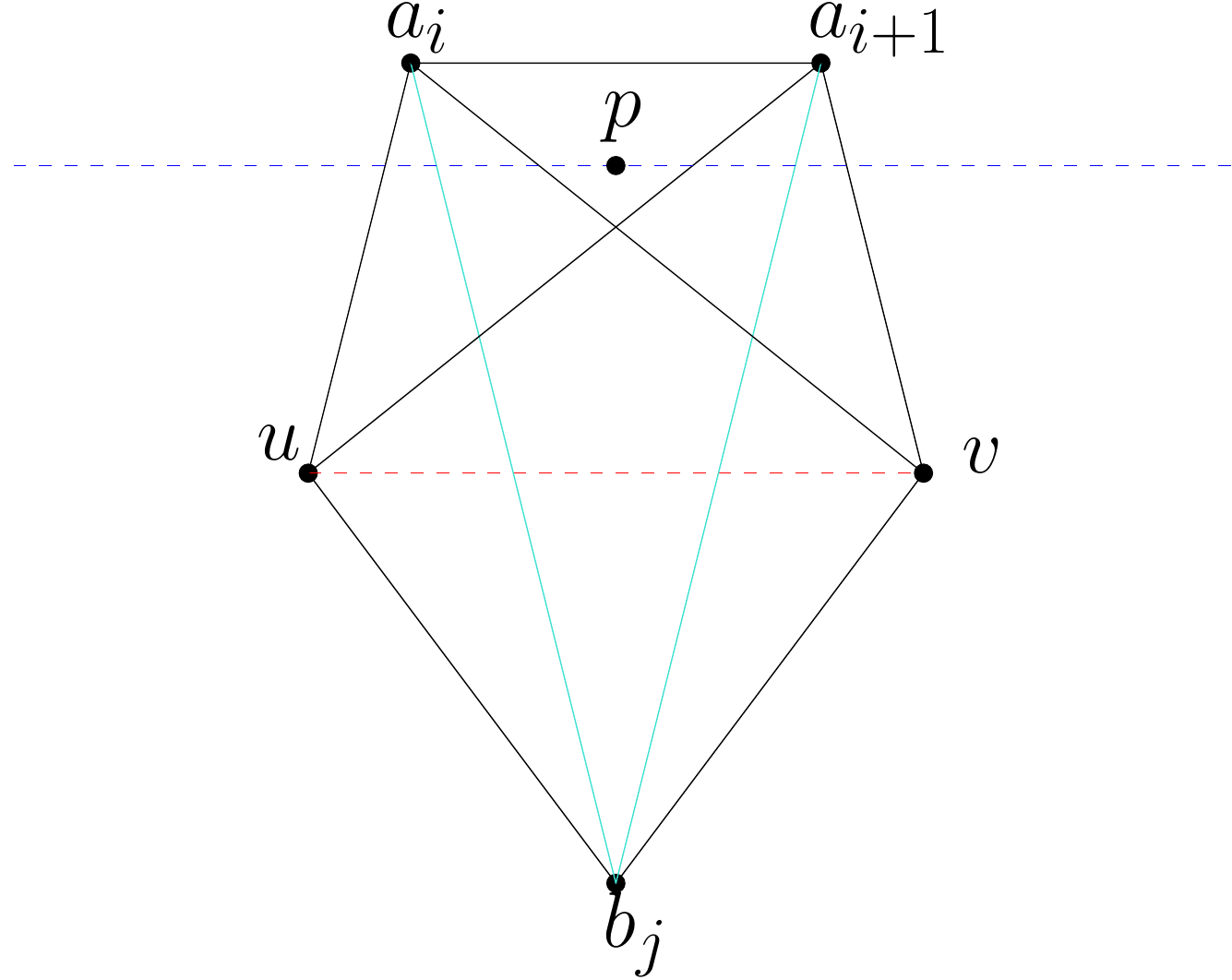}
    \caption{The endpoints of the edges $a_i b_j$ and $a_{i+1} b_j$ form an EC5 with $u$ and $v$.
}
    \label{fig:consecutive-EC5}
\end{figure}

\begin{proof}
If $e$ is a flip cut edge, then 
by Theorem~\ref{thm:Z-characterization}, $G_Z$ is disconnected, so we must have $|Z| \ge 2$. (For this direction we did not use the assumption about EC5's.) 

For the other direction, suppose $e$ is not a flip cut edge and $|Z| \ge 2$.  We will show that $e$ is a diagonal of an EC5.  By Theorem~\ref{thm:Z-characterization}, $G_Z$ is connected.  
Because neighbours in $G_Z$ are consecutive in the orderings of $A$ and $B$ (by Observation~\ref{obs:consecutive-neighbours}),
there must be two edges $f,g$ in $Z$ that are incident at one end and consecutive in the ordering of $A$ (or $B$) at the other end.  

Without loss of generality, suppose $f = a_{i}b_j$ and 
$g = a_{i+1}b_j$. 
See Figure~\ref{fig:consecutive-EC5}.
The five points $a_i, a_{i+1}, b_j, u, v$ form a convex pentagon. (Observe that the angle at $u$ is convex because of the EC4 $Q(f)$, and the angle at $a_i$ is convex because $a_i$ and $a_{i+1}$ make empty triangles with $uv$ and $a_i$ precedes $a_{i+1}$ in $A$, and similar arguments show that the other angles are convex.)  If this pentagon is not empty, then it contains a point $p$ outside $Q(f) \cup Q(g)$ and such a point of minimum $y$ coordinate is a point of $A$ between $a_i$ and $a_{i+1}$, contradition.  Thus the pentagon is empty, so $e$ is a diagonal of an EC5.  
\end{proof}

\subsubsection{Grid Point Sets}
\label{sec:grids}

The points of a $k \times \ell$ grid have no empty convex pentagons, so 
by Lemma~\ref{lemma:two-diagonals}, an edge is a flip cut edge if and only if $|Z| \ge 2$.
Edges ``near'' the boundary of the grid may fail to be flip cut edges, but we show that edges farther from the boundary are flip cut edges, and we show that in an infinite grid, all edges are flip cut edges.
We first deal with horizontal/vertical edges.

\begin{claim}
Let $e$ be a horizontal/vertical edge of a grid point set.  Then $e$ is a flip cut edge if and only if it does not lie on the boundary of the grid.
\end{claim}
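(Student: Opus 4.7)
The plan is to apply Lemma~\ref{lemma:two-diagonals}: since a $k\times\ell$ grid has no empty convex pentagon, the lemma reduces the problem to checking whether $|Z|\ge 2$. Assume without loss of generality that $e = uv$ with $u=(i,j)$ and $v=(i+1,j)$; the vertical case follows by a $90^\circ$ rotation. I read ``boundary of the grid'' in the statement as meaning that $e$ lies on the perimeter of the convex hull, i.e., $j\in\{0,\ell-1\}$ in the horizontal case.

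First I would identify the sets $A$ and $B$ exactly. I claim that $A$ is the (possibly empty) row of grid points at height $j+1$: for any $a=(\alpha,j+1)$, a grid point strictly inside the triangle $auv$ would have to sit on the row $y=j$ (where the triangle meets only the empty edge $uv$) or on the row $y=j+1$ (where the triangle touches only the apex $a$), while no grid point has $y$-coordinate strictly between $j$ and $j+1$. Thus every point on row $j+1$ belongs to $A$ regardless of how far $\alpha$ is from $\{i,i+1\}$; symmetrically $B$ is exactly row $j-1$. Next I would describe $Z$ algebraically: the segment from $a=(\alpha,j+1)$ to $b=(\beta,j-1)$ crosses the line $y=j$ at $x=\tfrac{\alpha+\beta}{2}$, and since $\alpha+\beta\in\mathbb{Z}$ this crossing lies strictly between $u$ and $v$ iff $\alpha+\beta=2i+1$. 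Whenever that holds, the quadrilateral $uavb$ is the union of the two empty triangles $uav$, $ubv$ glued along the empty edge $uv$, so it is an EC4. Hence $|Z|$ equals the number of integers $\alpha$ for which both $(\alpha,j+1)$ and $(2i+1-\alpha,j-1)$ lie in the grid.

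The two directions of the claim now fall out. If $e$ lies on the boundary, then $j\in\{0,\ell-1\}$, so one of $A$, $B$ is empty, $Z=\emptyset$, and $e$ is not a flip cut edge. Otherwise $1\le j\le \ell-2$, and a short interval calculation for $i\in\{0,\ldots,k-2\}$ shows that the admissible $\alpha$ form an integer interval of size $\min(2(i+1),\,2(k-1-i))\ge 2$, so $|Z|\ge 2$ and $e$ is a flip cut edge by Lemma~\ref{lemma:two-diagonals}. The only slightly delicate point in the plan is the emptiness claim for the possibly very thin triangles $auv$ when $\alpha$ lies far from $\{i,i+1\}$; once that is recorded carefully, everything else is routine bookkeeping.
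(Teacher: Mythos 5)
Your proposal is correct and follows essentially the same route as the paper: both reduce the claim to checking $|Z|\ge 2$ via Lemma~\ref{lemma:two-diagonals}, using that grid point sets have no EC5's. The paper simply exhibits the two witness edges $(x,y+1)$--$(x+1,y-1)$ and $(x+1,y+1)$--$(x,y-1)$ in $Z$ rather than computing $|Z|$ exactly as you do, so your argument is a more detailed instance of the same idea.
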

\begin{proof}
Note that $e$ must have unit length since no edge goes through intermediate points.
We may assume without loss of generality that $e$ goes from $(x,y)$ to $(x+1,y)$.  
If $e$ lies on the boundary then $Z$ is empty.  
For the converse, suppose $e$ is not on the boundary.
Then the points just above $e$, $a_1 = (x,y+1), a_2=(x+1,y+1)$, and the points just below $e$, $b_1=(x+1,y-1),b_2=(x,y-1)$, lie in the grid, and the two edges
$a_1,b_1$ and $a_2,b_2$ lie in $Z$, so $e$ is a flip cut edge by Lemma~\ref{lemma:two-diagonals}.   
\end{proof}

Next consider an edge $e$ that is not horizontal or vertical. Reflect so that $e$ goes from the point $(x,y)$ to the point  $(x+\Delta x,y + \Delta y)$ with $\Delta x \ge \Delta y > 1$. 
Since an edge cannot go through intermediate points,
$\gcd(\Delta x,\Delta y) = 1$.
Let $L$ be the line through $e$. On the infinite grid,
translate $L$ upward and parallel to itself until it hits grid points and call the resulting line $L_U$.   Similarly, translate $L$ downward and parallel to itself until it hits grid points, and call the resulting line $L_D$.

\remove{Note that any point on $L_U$ makes an empty triangle with $e$, and the same for any point on $L_D$.  Thus, a grid point $a \in L_U$ and a grid point $b \in L_D$ provide an edge $ab \in Z$ iff $ab$ crosses $e$. In fact, we can prove that any edge of $Z$ must have endpoints of $L_U$ and $L_D$---we use that result in Figure~\ref{fig:grid-flip-graph_b} to demonstrate that some edges are not flip cut edges, but we do not use that result in our proofs.
\anna{Well, we could include the little proof . . . }
}
\begin{claim}
\label{claim:LU-LD}
A grid point makes an empty triangle with $e$ iff it lies on $L_U$ or $L_D$.
\end{claim}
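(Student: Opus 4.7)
The plan is to combine a short number-theoretic observation about the spacing of lattice lines parallel to $L$ with Pick's theorem. First, since the direction vector $(\Delta x, \Delta y)$ of $L$ is primitive (as $\gcd(\Delta x, \Delta y) = 1$), Bezout's identity guarantees that the minimum positive value of $|\beta \Delta x - \alpha \Delta y|$ over integer pairs $(\alpha, \beta)$ equals $1$. Dividing by $|e| = \sqrt{\Delta x^2 + \Delta y^2}$ shows that the (signed) perpendicular distance from any lattice point to $L$ is an integer multiple of $1/|e|$, and consequently $L_U$ and $L_D$ are the lattice lines at perpendicular distance exactly $1/|e|$ above and below $L$ respectively.

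For the ``if'' direction, let $p$ be a lattice point on $L_U$. The triangle $puv$ lies entirely in the closed slab between $L$ and $L_U$, so any lattice point in its interior, or in the relative interior of the edges $pu$ or $pv$, would have to lie strictly between $L$ and $L_U$, contradicting the choice of $L_U$ as the lattice line nearest to $L$ on that side. Because $\gcd(\Delta x, \Delta y) = 1$, the edge $e = uv$ has no interior lattice points. Therefore $puv$ is empty. The argument for $p \in L_D$ is symmetric.

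For the ``only if'' direction, suppose $p$ is a lattice point above $L$ that is not on $L_U$. Then $p$ lies on a lattice line at perpendicular distance $k/|e|$ from $L$ with $k \ge 2$. Consequently, $puv$ has base $|e|$ and height $k/|e|$, giving area $k/2 \ge 1$. By Pick's theorem applied to this lattice triangle,
\[
\tfrac{k}{2} \;=\; I + \tfrac{B}{2} - 1,
\]
where $I$ and $B$ count interior and boundary lattice points. If $puv$ were empty, then $I = 0$ and $B = 3$ (only the three vertices), forcing $k/2 = 1/2$ and so $k = 1$, contradicting $k \ge 2$. Hence $puv$ is not empty. The case $p$ below $L$ is analogous.

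The only delicate step is the justification that lattice lines parallel to $L$ are spaced by exactly $1/|e|$, which rests on the primitivity of $(\Delta x, \Delta y)$; once that spacing is established, both directions follow cleanly, with Pick's theorem doing essentially all the work in the ``only if'' direction.
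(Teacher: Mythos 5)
Your proof is correct and follows essentially the same route as the paper: the nontrivial (``only if'') direction is handled by an area bound plus Pick's theorem, exactly as in the paper's argument. The only differences are matters of detail---you make the $1/|e|$ spacing of lattice lines explicit via Bezout and spell out the ``if'' direction (which the paper dismisses as clear), and you state Pick's formula with the correct $-1$ term---but these refine rather than change the approach.
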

\begin{proof}
The ``if'' direction is clear.  For the other direction, consider a triangle $T$ determined by $e$ and an apex grid point strictly above $L_U$.
The area of $T$ is strictly larger than the area of a triangle with apex on $L_U$.  By Pick's theorem (the area of a triangle on the grid is the number of grid points strictly inside the triangle plus half the number on the boundary of the triangle), triangle $T$ cannot be empty. 
\end{proof}

By the easy direction of this claim, a grid point $a \in L_U$ and a grid point $b \in L_D$ provide an edge $ab \in Z$ iff $ab$ crosses $e$. The other direction of the claim is used in Figure~\ref{fig:grid-flip-graph_b} to demonstrate that some edges are not flip cut edges.

We now analyze the points on $L_U$ and $L_D$.
We claim that there is a point $a = (a_x,a_y)$ on $L_U$ 
with 
$a_x \in [x,x+\Delta x)$
and 
$a_y \in (y,y+\Delta y]$.
To justify this, note that
if $a = (a_x,a_y)$ is a point on $L_U$, then so is $(a_x+\Delta x,a_y+\Delta y)$ and $(a_x- \Delta x,a_y- \Delta y)$.  This implies that there is a point $a = (a_x,a_y)$ on $L_U$ with 
$a_x \in [x,x+\Delta x)$.
And then we must have 
$a_y \in (y,y+\Delta y]$
otherwise $(a_x,a_y -1)$ would be between $L$ and $L_U$.    
By symmetry, the point $b = (b_x,b_y) := 
((x + \Delta x) - (a_x - x),
(y + \Delta y) - (a_y - y)) = (2x + \Delta x - a_x,2y + \Delta y - a_y)$
lies on $L_D$ and has
$b_x \in (x,x+\Delta x]$
and 
$b_y \in [y,y+\Delta y)$.
Then the segment $ab$ crosses $e$  at their midpoints, so $ab \in Z$.  
The same is true for the segment from $(a_x + i \Delta x,a_y + i \Delta y)$ to $(b_x - i \Delta x, b_y - i \Delta y)$ for any $i = 0, 1, \ldots$. 
This implies that on an infinite grid, $Z$ has infinite size.  Thus we have proved:

\begin{lemma}
\label{lem:infinite-grid}
For the points of the infinite integer grid, every edge is a flip cut edge.    
\end{lemma}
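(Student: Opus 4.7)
The plan is to assemble the two cases already developed in the paragraphs immediately preceding the lemma. Let $e$ be an arbitrary edge of the infinite integer grid.

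In the axis-aligned case, the preceding claim already says that a horizontal or vertical edge is a flip cut edge iff it does not lie on the grid boundary. Since the infinite grid has no boundary, every such $e$ is a flip cut edge.

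In the general case, after a reflection I may assume $e$ joins $(x,y)$ to $(x + \Delta x,\, y + \Delta y)$ with $\gcd(\Delta x, \Delta y) = 1$ and $e$ not axis-aligned. I would invoke the construction that precedes the lemma: it produces a lattice point $a \in L_U$ and its symmetric partner $b = (2x + \Delta x - a_x,\, 2y + \Delta y - a_y) \in L_D$ whose midpoint coincides with the midpoint of $e$, so $ab$ crosses $e$. Because $L_U$ and $L_D$ are parallel to $(\Delta x, \Delta y)$, the translated pair $a_i := a + i(\Delta x, \Delta y)$, $b_i := b - i(\Delta x, \Delta y)$ still lies in $L_U \times L_D$ for every integer $i$, and $a_i b_i$ still passes through the midpoint of $e$. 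By Claim~\ref{claim:LU-LD}, each of $a_i$ and $b_i$ forms an empty triangle with $e$, so $Q(a_i b_i)$ is an EC4 and therefore $a_i b_i \in Z$. Hence $|Z|$ is infinite, and in particular $|Z| \ge 2$.

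To close the argument I would apply Lemma~\ref{lemma:two-diagonals}: the integer grid has no empty convex pentagons (as recalled in Section~\ref{sec:no-EC5}), so $|Z| \ge 2$ already forces $e$ to be a flip cut edge. The only point that needs a sanity check rather than a real obstacle is that each $a_i b_i$ is a genuine edge of $P$, i.e., contains no intermediate lattice point; this falls out of the EC4 property, since any such intermediate point would sit strictly inside $Q(a_i b_i)$ and contradict emptiness.
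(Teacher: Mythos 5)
Your proposal is correct and follows essentially the same route as the paper: handle axis-aligned edges via the boundary claim, and for the remaining edges use the $L_U$/$L_D$ construction and its translates to get infinitely many (in particular at least two) edges of $Z$, then conclude via Lemma~\ref{lemma:two-diagonals} since grid points admit no empty convex pentagons. The sanity check that each $a_i b_i$ is a genuine edge is a fine (and non-circular) addition, since emptiness of $Q(a_i b_i)$ comes from the two empty triangles guaranteed by Claim~\ref{claim:LU-LD}.
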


In a finite grid,  
the boundary edges are not flip cut edges, but they are not the only exceptions, see Figure~\ref{fig:grid-flip-graph}.  It is possible to characterize flip cut edge in terms of integer solutions to equations, but for now we simply note that
a large enough grid of $n$ points has $\Theta(n^2)$ flip cut edges, as justified by the following.

\begin{proposition} 
Let $G$ be 
a $k \times \ell$ grid and let $G'$ be the middle one-third. Then every edge of $G'$ is a flip cut edge of $G$.
\end{proposition}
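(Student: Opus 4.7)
The plan is to apply Lemma~\ref{lemma:two-diagonals}: since a grid contains no EC5, an edge $e$ of $G$ is a flip cut edge if and only if $|Z(e)| \geq 2$. I would split the argument based on whether $e$ is axis-parallel.

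For a horizontal or vertical edge $e$ of $G'$, the preceding claim does all the work, since $G'$ lies strictly in the interior of $G$, and hence $e$ cannot lie on the outer boundary of $G$.

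For a non-axis-parallel edge $e$ of $G'$, I would recycle the construction developed for Lemma~\ref{lem:infinite-grid}. After reflecting $G$---which stabilizes $G'$ because $G'$ is the central sub-grid---we may assume $e$ runs from $(x,y)$ to $(x+\Delta x, y+\Delta y)$ with $\Delta x \geq \Delta y \geq 1$ and $\gcd(\Delta x, \Delta y)=1$. That construction produces grid points $a=(a_x,a_y) \in L_U$ and $b=(b_x,b_y) \in L_D$ with $a_x \in [x, x+\Delta x)$, $a_y \in (y, y+\Delta y]$, $b_x \in (x, x+\Delta x]$, $b_y \in [y, y+\Delta y)$, and shows that for each $i \geq 0$ the translated pair $a + i(\Delta x, \Delta y)$ and $b - i(\Delta x, \Delta y)$ gives an edge in $Z(e)$, \emph{provided both endpoints lie in $G$}. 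I would then check that on the finite grid $G$ the pairs for $i=0$ and $i=1$ both have their endpoints in $G$, which yields two distinct edges in $Z(e)$ (distinct because the upper endpoints differ by $(\Delta x, \Delta y) \neq (0,0)$), and hence $|Z(e)| \geq 2$ as required.

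The verification is routine interval arithmetic. Since $e \in G'$, the coordinates satisfy $x, x+\Delta x \in [k/3, 2k/3]$ and $y, y+\Delta y \in [\ell/3, 2\ell/3]$, forcing $\Delta x \leq k/3$ and $\Delta y \leq \ell/3$. Plugging these into the intervals that define $a_0, b_0$ shows at once that both points lie in $[0,k-1] \times [0,\ell-1]$; shifting by $\pm(\Delta x, \Delta y)$ to obtain $a_1, b_1$ changes coordinates by at most $k/3$ in $x$ and $\ell/3$ in $y$, which is easily absorbed. The main obstacle is the bookkeeping of strict versus non-strict inequalities inherited from the infinite-grid construction, along with the integer-rounding edge cases in the precise definition of ``middle one-third''; these should cause no trouble but need to be checked carefully.
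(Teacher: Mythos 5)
Your proof is correct and essentially identical to the paper's: apply Lemma~\ref{lemma:two-diagonals}, dispose of axis-parallel edges by the preceding Claim, and for an oblique edge take the $L_U$/$L_D$ pair at $i=0$ and its translate at $i=1$ from the infinite-grid construction, checking that all four endpoints stay inside $G$. The only differences are cosmetic---you make the axis-parallel case explicit where the paper leaves it implicit, and you correctly write $\Delta x \ge \Delta y \ge 1$ where the paper's ``$\Delta y > 1$'' is a typo.
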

\begin{proof}
Let $e$ be an edge of $G'$. We will show that $Z$ has at least two elements. 
As above, we may assume that $e$ goes from $(x,y)$ to $(x + \Delta x,y+ \Delta y)$ with $\Delta x \ge \Delta y >1$ and $\gcd(\Delta x,\Delta y) =1$. Since $e$ lies in $G'$ we have $\Delta x \le k/3$, $\Delta y \le \ell/3$.

As noted above, there is a point 
$a = (a_x,a_y)$ on $L_U$ with
$a_x \in [x,x+\Delta x)$
and 
$a_y \in (y,y+\Delta y]$
and a point 
$b = (b_x,b_y) = (2x + \Delta x - a_x,2y + \Delta y - a_y)$
on $L_D$ with  
$b_x \in (x,x+\Delta x]$
and 
$b_y \in [y,y+\Delta y)$.
Both $a$ and $b$ lie in the grid $G$ and $ab \in Z$.
For our second element of $Z$, we use the edge from $(a_x + \Delta x,a_y + \Delta y)$ on $L_U$ to $(b_x - \Delta x, b_y - \Delta y)$ on $L_D$. Note that these points lie in $G$.
\end{proof}

\begin{figure}[ht]
    \begin{subfigure}[t]{0.45\textwidth}
        \centering
          \includegraphics[width=\textwidth]{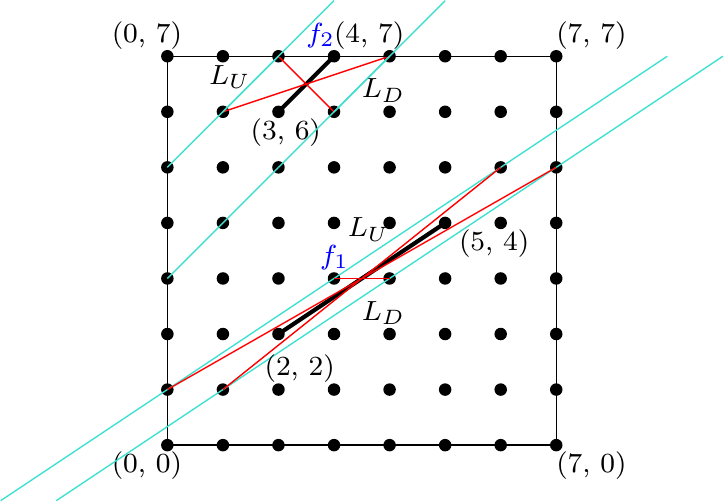}
          \caption{$f_1$ and $f_2$ are flip cut edges since they are in at least two EC4's, as shown by the diagonals (in red) with endpoints on $L_U$ and $L_D$ (in cyan). 
}
          \label{fig:grid-flip-graph_a}
    \end{subfigure}%
    \hfill
    \begin{subfigure}[t]{0.45\textwidth}
          \centering
          \includegraphics[width=\textwidth]{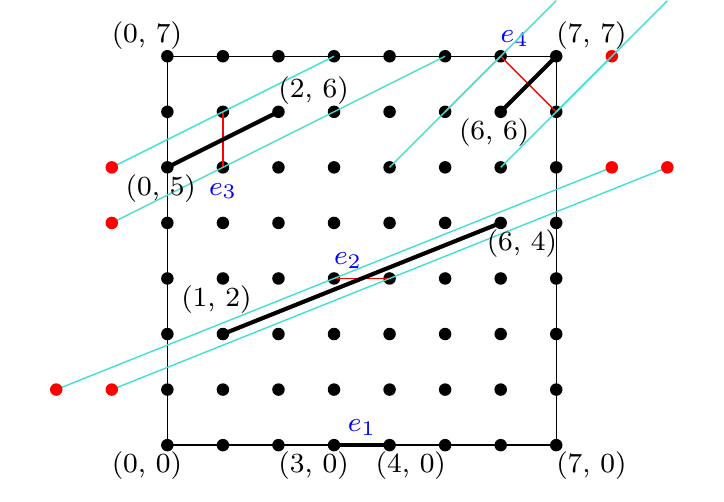}
          \caption{$e_1, e_2, e_3, e_4$ are not flip cut edges since they are in 0 or 1 EC4's.
}
          \label{fig:grid-flip-graph_b}
    \end{subfigure}
    \caption{Example edges in a $7 \times 7$ grid.
    }
    \label{fig:grid-flip-graph}
\end{figure}

\label{sec:channels}
\subsubsection{The Number of Flip Cut Edges}

In Section~\ref{sec:examples}, we claimed that channels, as shown in Figure~\ref{fig:channel}, are point sets with $\Theta(n^2)$ flip cut edges.  Here we justify that claim.

\begin{proposition}
For a channel, every edge between an interior point on the upper reflex chain and an interior point on the lower reflex chain
is a flip cut edge.
\end{proposition}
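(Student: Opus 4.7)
The plan is to apply Theorem~\ref{thm:Z-characterization}: it suffices to show, for any edge $e = b_i t_j$ with $i$ and $j$ both interior to the chains, that the graph $G_Z$ is disconnected. My strategy is to partition $Z$ into two nonempty subsets whose vertex sets are disjoint, which immediately forces $G_Z$ to have at least two connected components.

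First I would set up the geometry by fixing the standard Hurtado--Noy--Urrutia channel, so that the top chain runs $t_1, \ldots, t_n$ from left to right, the bottom chain $b_1, \ldots, b_n$ likewise, and the two chains share their endpoints. In such a channel each chord $t_{j'} b_{i'}$ from top to bottom is essentially vertical, and a routine planarity argument shows that two such chords $t_{j'} b_{i'}$ and $t_j b_i$ cross if and only if $(j'-j)(i'-i) < 0$. Consequently $Z$ partitions as $Z_{LR} \cup Z_{RL}$, where
\[ Z_{LR} = \{t_{j'} b_{i'} \in Z : j' < j,\ i' > i\}, \qquad Z_{RL} = \{t_{j'} b_{i'} \in Z : j' > j,\ i' < i\}. \]
The vertex sets used by these two subgraphs of $G_Z$, namely $\{t_{j'} : j' < j\} \cup \{b_{i'} : i' > i\}$ and $\{t_{j'} : j' > j\} \cup \{b_{i'} : i' < i\}$, are disjoint, so no edge of $G_Z$ joins $Z_{LR}$ to $Z_{RL}$.

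Next I would exhibit one witness in each part. The natural candidates are $t_{j-1} b_{i+1} \in Z_{LR}$ and $t_{j+1} b_{i-1} \in Z_{RL}$; both are available because $i$ and $j$ are interior. I must verify that each lies in $Z$, i.e.\ that the four-point set $\{t_{j-1}, t_j, b_{i+1}, b_i\}$ (respectively $\{t_{j+1}, t_j, b_{i-1}, b_i\}$) is an EC4 with $e$ as one diagonal. Convexity follows because the reflex angles of the chains at $t_j$ and $b_i$ force the quadrilateral to be convex at those two vertices, while monotonicity of the chains forces convexity at the other two. Emptiness follows from the defining property of a channel that an interior chain vertex sees only a small local neighbourhood of the opposite chain inside any empty triangle.

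The main obstacle I expect is making the two geometric inputs fully rigorous: that two cross-channel chords cross exactly when their indices are oppositely ordered, and that the two candidate quadrilaterals are empty. Both rely on pinning down the precise definition of a reflex channel. Once these are settled, $G_Z$ has at least two connected components, so Theorem~\ref{thm:Z-characterization} concludes that $e$ is a flip cut edge.
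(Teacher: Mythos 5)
Your proof is correct, but it takes a somewhat different route from the paper's. You argue directly from Theorem~\ref{thm:Z-characterization}: you observe that two cross-channel chords $t_{j'}b_{i'}$ and $t_j b_i$ cross exactly when $(j'-j)(i'-i)<0$, which splits $Z$ into $Z_{LR}$ and $Z_{RL}$ with \emph{disjoint vertex sets}, and then you exhibit the witnesses $t_{j-1}b_{i+1}$ and $t_{j+1}b_{i-1}$ to show both halves are nonempty, so $G_Z$ is disconnected. The paper instead routes through Lemma~\ref{lemma:two-diagonals}: it first notes that $b_i t_j$ cannot be a diagonal of an EC5 (an EC5 would need three points on one reflex chain, which cannot be in convex position with a point of the other chain), then exhibits the \emph{same} two witness EC4's to get $|Z|\ge 2$, and invokes the lemma. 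Both arguments hinge on the same pair of local EC4's; the difference is the global step. Your crossing-criterion partition is more self-contained and does not need the EC5 machinery, which makes it a touch more elementary, but it does require you to justify that every edge crossing $e$ is a cross-chain chord (true in a channel, since a chord between two points of the same reflex chain bulges away from $e$). The paper's EC5 route leans on a lemma it had already proved and whose hypothesis (no local EC5) is essentially immediate for channels, so it is arguably shorter once that lemma is available. One minor note: be aware that the paper's indexing may run in opposite directions along the two chains (the figure forbids $b_2 t_{n-1}$), so your condition $(j'-j)(i'-i)<0$ should be read under the specific left-to-right labelling you fixed; the partition argument is unaffected by this choice of convention.
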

\begin{proof}
Let the points of the upper reflex chain be $t_1, \ldots, t_n$ and the points of the lower reflex chain be $b_1, \ldots, b_n$, as shown in Figure~\ref{fig:channel}.
Consider the edge $b_i t_j$ where $i,j \notin \{1,n\}$.  
We use Lemma~\ref{lemma:two-diagonals} to show that $b_i t_j$ is a flip cut edge.
There is no  EC5 with $b_i t_j$ as a diagonal because such an EC5 would have to have at least 3 points of one reflex chain, say the upper one, and those points cannot be part of a convex polygon with $b_i$. 
However, both the edges $b_{i-1} t_{j+1}$ and $b_{i+1} t_{j-1}$ form EC4's together with $b_i t_j$.
Thus by Lemma~\ref{lemma:two-diagonals}, $b_i t_j$ is a flip cut edge. 
\end{proof}

\subsubsection{The Number of Components from a Flip Cut Edge}
\label{sec:hourglasses}

In Section~\ref{sec:examples}, we claimed that one flip cut edge $e$ can cause $\Theta(n)$ disconnected components in ${\cal F}_{-e}$.   
Here we justify that claim.

\begin{theorem} 
A flip cut edge can create $O(n)$ disconnected components in the flip graph, and this is the most possible.
\end{theorem}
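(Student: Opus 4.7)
The plan is to handle the two directions separately: an $O(n)$ upper bound on the number of components created by any single flip cut edge, and an $\Omega(n)$ lower bound via the hourglass of Figure~\ref{fig:dis_comp}.

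For the upper bound, the strategy is to chain together the correspondences already set up in Section~\ref{sec:characterization}. Write $N$ for the total number of components of $\mathcal{F}_{-e}$. First, by Observation~\ref{obs:union} every triangulation in $\mathcal{T}_{-e}$ lies in some $\mathcal{T}_{+f}$ with $f \in Y$, so each component of $\mathcal{F}_{-e}$ contains at least one set $\mathcal{T}_{+f}$; combining this with Theorem~\ref{thm:new-characterization-2} gives a bijection between components of $\mathcal{F}_{-e}$ and components of $G_Y$. Next, Lemma~\ref{lemma:YcontainZ} says each component of $G_Y$ contains at least one edge of $Z$, and Lemma~\ref{lemma:G_ZconnectedG_Y} says two edges of $Z$ lie in the same component of $G_Z$ iff they lie in the same component of $G_Y$; together these yield a bijection between components of $G_Y$ and components of $G_Z$. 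Thus $N$ equals the number of components of $G_Z$.

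To bound the number of components of $G_Z$, note that every such component, being nonempty, contains at least one edge of $Z$, which has one endpoint in $A$ and one in $B$. Distinct components of $G_Z$ use disjoint vertex sets by definition, hence disjoint subsets of $A$ and disjoint subsets of $B$. Therefore $N \le \min(|A|,|B|) \le (n-2)/2 = O(n)$.

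For the lower bound, I will verify that the hourglass of Figure~\ref{fig:dis_comp} realizes $\Theta(n)$ components with the flip cut edge $e = uv$. The construction is arranged so that the only empty convex quadrilaterals having $e$ as a diagonal are the $n$ quadrilaterals $Q(a_i b_i)$, i.e., $Z = \{a_1 b_1, a_2 b_2, \ldots, a_n b_n\}$; any would-be cross pair $a_i b_j$ with $i \ne j$ either fails to cross $uv$ or has a point of $A \cup B$ inside $Q(a_i b_j)$ because the two chains pinch tightly near $e$. Since the edges of $Z$ are pairwise vertex-disjoint, $G_Z$ consists of $n$ isolated edges and therefore has $n$ connected components, which by the correspondence above gives $n$ components of $\mathcal{F}_{-e}$.

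The main obstacle is really just in the lower bound: one must be sure that the geometry of the hourglass truly forbids every off-diagonal quadrilateral $Q(a_i b_j)$ from being an empty EC4, which is a routine but careful check from the coordinates. Everything else is a direct assembly of the previously established characterization.
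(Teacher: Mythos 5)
Your overall plan is the paper's: the upper bound comes from the component correspondence of Theorem~\ref{thm:new-characterization-2}, and the lower bound from the hourglass of Figure~\ref{fig:dis_comp}. Your upper bound is correct. The paper argues it one step earlier in the chain: components of ${\cal F}_{-e}$ biject with components of $G_Y$ (Theorem~\ref{thm:new-characterization-2} plus Observation~\ref{obs:union}), and distinct components of $G_Y$ use disjoint point sets, so there are at most $n$ of them. Your detour through $G_Z$ via Lemmas~\ref{lemma:YcontainZ} and~\ref{lemma:G_ZconnectedG_Y} is also valid and even yields the sharper bound $\min(|A|,|B|)\le (n-2)/2$; a cosmetic slip is that the pairwise disjoint edges of $Z$ become $n$ isolated \emph{vertices} of the line graph $G_Z$, not isolated edges.

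The genuine gap is in the lower bound, and you flagged it yourself: you assert that in the hourglass the only relevant crossing edges are the $n$ disjoint edges $a_ib_i$, and defer the verification to ``a routine but careful check from the coordinates''---but you never specify the coordinates or the construction, so there is nothing to check against, and the vague appeal to the chains ``pinching tightly near $e$'' does not pin down why an off-diagonal pair $a_ib_j$ is excluded. The paper closes exactly this step by choosing the construction so that the verification is immediate: place $a_1,\ldots,a_n$ on the upper half of a circle, let $b_i$ be diametrically opposite $a_i$, and put $u,v$ on the horizontal diameter so close to the center that they lie \emph{inside} every wedge at $a_i$ bounded by the rays toward $b_{i-1}$ and $b_{i+1}$. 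With this placement no edge $a_ib_j$ with $j\ne i$ crosses $uv$ at all (the segment $uv$ lies strictly inside each wedge, while $a_ib_j$ lies on or outside its boundary), and edges within one chain or incident to $u,v$ cannot cross $uv$ either; hence $Y$ itself equals $\{a_1b_1,\ldots,a_nb_n\}$, $G_Y$ has $n$ components, and Theorem~\ref{thm:new-characterization-2} gives $n$ components of ${\cal F}_{-e}$. Note this is cleaner than your formulation: with the right construction one argues directly about $Y$ and crossing, so no analysis of emptiness of the quadrilaterals $Q(a_ib_j)$ (i.e.\ of membership in $Z$) is needed. To repair your write-up, supply such an explicit construction and prove the crossing claim; without it the $\Omega(n)$ half of the theorem is unsupported.
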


\begin{proof}
First of all, connected components in the flip graph ${\cal F}_{-e}$ correspond to connected components in $G_Y$ (by Theorem~\ref{thm:new-characterization-2}) and therefore correspond to disjoint sets of points, so there are at most $n$ of them.

The hourglass in Figure~\ref{fig:dis_comp} is 
an example of a point set of size $2n+2$ with a flip cut edge $e$ that results in $n$ components in ${\cal F}_{-e}$.  
To construct the hourglass,  place points $a_1, \ldots, a_n$ along the top half of a circle, and let $b_1, \ldots, b_n$ be the diametrically opposite points.  For each $i$ construct the wedge from $a_i$ to $b_{i-1}$ and $b_{i+1}$.  Place points $u$ and $v$ on the horizontal diameter of the circle, so close to the center of the circle that they are \emph{inside} all the wedges.  Then $G_Y$ consists of the $n$ edges $a_i b_i$, and no two of these are connected in $G_Y$.  Thus (by Theorem~\ref{thm:new-characterization-2}) ${\cal F}_{-e}$ has $n$ disconnected components. 
\end{proof}

\section{Flip Cut Number for Points in Convex Position}



As mentioned in the Introduction, a point set in convex position has no flip cut edge, i.e., its flip cut number is greater than 1.  
In this section we show that the flip cut number of $n$ points in convex position is $n-3$. 

We give a direct proof, but first 
we discuss what the result means in terms of associahedra.
The flip graph of $n$ points in convex position is the 1-skeleton of the $(n-3)$-dimensional associahedron ${\cal A}_{n-1}$
(see Figure \ref{fig:A_5}) so by Balinski's theorem~\cite{balinski1961graph}, the 1-skeleton is $(n-3)$-connected. 
The dual 
polytope $\bar{\cal A}_{n-1}$ is also an $(n-3)$-dimensional polytope with an $(n-3)$-connected 1-skeleton. 
The face lattice of $\bar{\cal A}_{n-1}$ is the inverted face lattice of ${\cal A}_{n-1}$, so the vertices of $\bar{\cal A}_{n-1}$ correspond to the $(n-4)$-dimensional faces (the facets) of ${\cal A}_{n-1}$ and the  edges of $\bar{\cal A}_{n-1}$ correspond to $(n-5)$-dimensional faces of  ${\cal A}_{n-1}$. 
Forbidding a chord of the original polygon corresponds to deleting a facet of ${\cal A}_{n-1}$, i.e., a vertex of $\bar{\cal A}_{n-1}$.
Deleting fewer than $n-3$ facets of ${\cal A}_{n-1}$ leaves the remaining facets connected via $(n-5)$-dimensional faces. 
What our result shows is that 
the remaining vertices (0-dimensional faces) of ${\cal A}_{n-1}$ are connected via the remaining edges (1-dimensional faces) of ${\cal A}_{n-1}$.
We do not see how to prove our result using this polyhedral interpretation.   
Instead, we give a direct proof that the flip cut number of $n$ points in convex position is $n-3$.

For points in convex position, we number the points $p_1, \ldots, p_n$ in cyclic order around the convex hull.  The edges $p_i p_{i+1}$ (addition modulo $n$) must be present in every triangulation.  A \defn{chord} is a line segment joining points that are not consecutive in the cyclic ordering. An \defn{ear} 
is a 
chord of the form $p_{i-1} p_{i+1}$ (addition modulo $n$), and we say that this ear \defn{cuts off} point $p_{i}$.
Every triangulation of a convex point set on $n \ge 4$ points has at least two ears (because the dual tree has at least two leaves).
A \defn{star triangulation} is a triangulation all of whose chords are incident to the same point.

For a set $X$ of chords, we study connectivity of ${\cal F}_{-X}$. 
We define the \defn{degree} of a point $p$ in $X$ to be the number of chords of $X$ incident to $p$. 

We begin with a necessary condition for a flip cut set. This provides a rigourous proof that the flip cut number is greater than 1, and will also be needed for our main result. 

\begin{lemma}
\label{lemma:F-touches-all}
If $X$ is a flip cut set for a set of points in convex position, then 
every point is incident to an edge of $X$, i.e., the degree in $X$ of every point is at least 1.
 \end{lemma}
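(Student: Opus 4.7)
The plan is to prove the contrapositive: if some point $v \in P$ is not incident to any chord of $X$, then $\mathcal{F}_{-X}$ is connected, so $X$ is not a flip cut set. Let $S_v$ denote the star triangulation centered at $v$. Every chord of $S_v$ is incident to $v$, so by the assumption none of them lies in $X$, and hence $S_v \in \mathcal{T}_{-X}$. It suffices to show that every $T \in \mathcal{T}_{-X}$ can be connected to $S_v$ by a sequence of flips that stays entirely inside $\mathcal{T}_{-X}$.

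I would argue by induction on the potential $\phi(T) = (n-3) - d_T(v)$, where $d_T(v)$ is the number of chords of $T$ incident to $v$. The base case $\phi(T) = 0$ gives $T = S_v$. For the inductive step, suppose $T \neq S_v$. List the neighbors of $v$ in $T$ cyclically along the polygon as $q_0, q_1, \ldots, q_m$, where $q_0$ and $q_m$ are the two polygon neighbors of $v$. The triangles of $T$ at $v$ are $(v, q_r, q_{r+1})$ for $0 \le r < m$, and each ``opposite'' side $q_r q_{r+1}$ is either a polygon edge (when $q_r$ and $q_{r+1}$ are consecutive on the hull) or a chord. Since $T \ne S_v$, at least one such opposite side $c = q_r q_{r+1}$ must be a chord; otherwise every $v$-triangle would be an ear of the polygon cutting off a single hull vertex and we would have $T = S_v$. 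Let $p_k$ be the apex of the other triangle of $T$ incident to $c$. Then $p_k$ lies strictly between $q_r$ and $q_{r+1}$ along the hull on the side opposite $v$, so $p_k$ is neither $v$ nor a polygon neighbor of $v$; in particular $v p_k$ is a genuine chord of $P$. The quadrilateral $v q_r p_k q_{r+1}$ is convex because $P$ is in convex position, so flipping $c$ replaces it by $v p_k$. Since $v$ is incident to no chord of $X$, the new chord $v p_k$ is not forbidden, and the other chords of the triangulation are unchanged, so the flipped triangulation still lies in $\mathcal{T}_{-X}$. This flip increases $d_T(v)$ by one, strictly decreasing $\phi$, which closes the induction.

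The main step to be careful about is the structural observation that whenever $T \ne S_v$ some $v$-triangle of $T$ must have a chord (rather than a polygon edge) as its opposite side, so that a flip to a new chord at $v$ is always available. Everything else is routine: convexity of the flipping quadrilateral is automatic in convex position, and the fact that $v p_k \notin X$ is immediate from the hypothesis that $v$ has degree $0$ in $X$. Taken together, these show every $T \in \mathcal{T}_{-X}$ is connected to $S_v$ in $\mathcal{F}_{-X}$, so $\mathcal{F}_{-X}$ is connected and $X$ is not a flip cut set, proving the contrapositive.
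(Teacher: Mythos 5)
Your proposal is correct and follows essentially the same route as the paper's proof: argue the contrapositive and flip any $T \in \mathcal{T}_{-X}$ toward the star triangulation at the untouched point, noting that each flip creates a new chord incident to that point (hence not forbidden) and strictly increases its degree. Your potential-function bookkeeping and the explicit check that some $v$-triangle has a chord as its opposite side are just a more detailed write-up of the same induction.
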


\begin{proof}
We prove the contrapositive: if there is a point $p \in P$ with no incident edges of $X$, then $X$ is not a flip cut set.
Specifically, we prove that any triangulation $T$ in ${\cal T}_{-X}$ is connected to the star triangulation centered at $p$.
This is in fact the standard way to show connectivity of the (full) flip graph of a convex point set.  In our situation, we must show that the required flips do not use forbidden edges. 

So, suppose triangulation $T$ contains a triangle 
incident to $p$ whose other two points, $q$ and $r$, are not consecutive on the convex hull. 
Now $T$ contains a triangle, say $qrs$, on the other side of $qr$. Then $pqsr$ forms a convex quadrilateral, and we can flip the chord $qr$ to the chord $ps$ (which is not forbidden).  This increases the number of chords incident to $p$, so, by induction, $T$ can be flipped to the star triangulation centered at $p$ without using forbidden edges. 
\end{proof}

\begin{corollary}
    \label{cor:star}

There is no flip cut edge in a convex point set.
\end{corollary}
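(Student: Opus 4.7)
The plan is to derive the corollary as an essentially immediate consequence of Lemma~\ref{lemma:F-touches-all}. The lemma says that in any flip cut set $X$ for a convex point set, every point must be incident to at least one edge of $X$. If I try to use a single chord $e$ as the cut set, then $X = \{e\}$ has only two endpoints, so only two points of $P$ are incident to any edge of $X$.

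First I would dispense with the trivial cases: if $n \le 3$, then there are no chords at all, so no flip cut edge exists vacuously. For $n \ge 4$, a single chord $e = p_i p_j$ has exactly two endpoints, while $P$ contains $n \ge 4$ points, so there exists at least one point $p \in P$ that is not an endpoint of $e$. Then the degree of $p$ in $\{e\}$ is zero, contradicting the conclusion of Lemma~\ref{lemma:F-touches-all}.

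Therefore $\{e\}$ cannot be a flip cut set, which is exactly the statement that $e$ is not a flip cut edge. Since the argument works for an arbitrary chord $e$, no flip cut edge exists in a convex point set.

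There is no real obstacle here; the whole content has been absorbed into the lemma. The only mild subtlety is making sure we cover the degenerate regime $n \le 3$ where the flip graph is a single vertex and the notion of flip cut edge is vacuous, but this is just a boundary check rather than a genuine difficulty.
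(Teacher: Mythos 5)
Your proof is correct and matches the paper's approach exactly: the corollary is stated immediately after Lemma~\ref{lemma:F-touches-all} precisely because a single chord touches only two of the $n \ge 4$ points, leaving some point uncovered. The explicit $n \le 3$ boundary check is a harmless addition that the paper leaves implicit.
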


In the rest of this section we prove that the flip cut number for $n$ points in convex position is $n-3$.  
We begin with the following two lemmas.

\begin{lemma}
    \label{lemma:flip cut number}
A convex point set of size $n$ has a flip cut set of size $n-3$.
\end{lemma}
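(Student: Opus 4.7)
The plan is to exhibit an explicit flip cut set $X$ of size $n-3$ (assuming $n \ge 6$, since otherwise no set of $n-3$ chords can cover all $n$ vertices as required by Lemma~\ref{lemma:F-touches-all}). My strategy is to pick a triangulation $T_1$ and let $X$ be the set of ``flip targets'' of $T_1$'s chords: for each chord $e$ of $T_1$, include in $X$ the other diagonal of the quadrilateral formed by the two triangles of $T_1$ adjacent to $e$. This makes $T_1$ frozen (an isolated vertex) in ${\cal F}_{-X}$ automatically, so to disconnect the flip graph it then suffices to exhibit another triangulation $T_2 \in {\cal T}_{-X}$. I will take $T_1$ to cut off every even-indexed vertex $p_2, p_4, \ldots, p_{2\lfloor n/2\rfloor}$ as an ear (using the short chords $p_{2k-1} p_{2k+1}$, indices modulo $n$), and to fan-triangulate the resulting inner polygon on the odd-indexed vertices from $p_1$ (using chords $p_1 p_{2j+1}$ for $j$ in the appropriate range). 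This is a triangulation with exactly $n-3$ chords.

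I then compute $X$ by identifying the two triangles of $T_1$ adjacent to each chord. A ``middle'' ear chord $p_{2k-1} p_{2k+1}$ is flanked by the ear triangle $p_{2k-1} p_{2k} p_{2k+1}$ and the inner fan triangle $p_1 p_{2k-1} p_{2k+1}$, so it flips to $p_1 p_{2k}$. The extremal ear chord at $k = 1$ is special---its inner-side apex is $p_5$ rather than $p_1$, so it flips to $p_2 p_5$; when $n$ is even, the last ear chord (at $k = n/2$) is symmetrically special and flips to $p_{n-3} p_n$. Each fan chord $p_1 p_{2j+1}$ is flanked by two fan triangles and flips to $p_{2j-1} p_{2j+3}$. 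A direct index count confirms that $X$ has exactly $n-3$ distinct chords, that every vertex of $P$ appears as an endpoint of some chord of $X$ (so the necessary condition from Lemma~\ref{lemma:F-touches-all} holds), and that $T_1 \cap X = \emptyset$: every chord of $T_1$ is either incident to $p_1$ paired with an odd partner, or joins two odd-indexed vertices at cyclic distance~$2$, whereas each chord of $X$ has a different endpoint pattern (e.g.\ $p_1$ with an even partner, or two odd-indexed vertices at cyclic distance~$4$).

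Finally, I construct a second triangulation $T_2 \ne T_1$ in ${\cal T}_{-X}$. For $n$ even, take the ``complementary'' triangulation with ears at all odd-indexed vertices and a fan at $p_2$ inside the inner polygon on the even-indexed vertices. For $n$ odd, take ears at $p_1, p_3, \ldots, p_{n-2}$ and a fan at $p_4$ inside the inner polygon on $\{p_2, p_4, \ldots, p_{n-1}, p_n\}$. A parity check gives $T_2 \cap X = \emptyset$ in both cases. Since $T_1$ is an isolated vertex of ${\cal F}_{-X}$ and $T_2 \ne T_1$ lies in some other component, the graph ${\cal F}_{-X}$ is disconnected, so $X$ is a flip cut set of size $n-3$. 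The main obstacle will be the careful bookkeeping for the extremal ear chords (whose inner-side triangle deviates from the generic form $p_1 p_{2k-1} p_{2k+1}$), and the separate treatment of $n$ even versus $n$ odd when constructing $T_2$.
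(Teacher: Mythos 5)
Your construction is correct and follows the paper's high-level strategy: pick a triangulation $T_1$ with $n-3$ chords, forbid the set $X$ of its $n-3$ flip targets (thereby making $T_1$ frozen in ${\cal F}_{-X}$), and exhibit a second triangulation $T_2 \in {\cal T}_{-X}$. The difference is the choice of $T_1$: the paper uses a zigzag triangulation, and shows $T_2$ (the ``mirrored'' zigzag) avoids $X$ by a slick counting argument (each forbidden chord crosses exactly one chord of $T_1$, while each chord of $T_2$ crosses at least two); you instead use an ``ears at even vertices plus fan from $p_1$'' triangulation and verify $T_2 \cap X = \emptyset$ by a parity/endpoint-pattern check. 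Both witness triangulations work; the zigzag has the advantage that the avoidance argument is uniform (no special cases for extremal ears or for $n$ odd versus even), whereas your version requires the bookkeeping you flagged. Your explicit remark that the lemma implicitly assumes $n \ge 6$ (since $n-3$ chords cannot touch all $n$ vertices for $n \le 5$, and Lemma~\ref{lemma:F-touches-all} is necessary) is a point the paper glosses over; it is in fact needed for the paper's zigzag construction too.
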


\begin{figure}[t]
    \begin{subfigure}[t]{0.45\textwidth}
        \centering
          \includegraphics[width=0.6\textwidth]{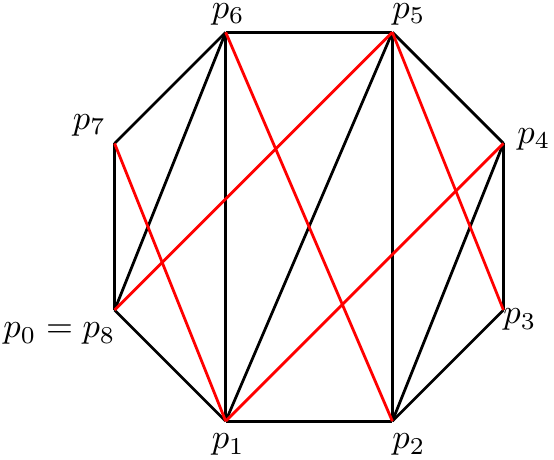}
          \caption{A zigzag triangulation $T$ (in black) and 
          a set of forbidden edges $X$ (in red) that makes $T$ a frozen triangulation.
          } 
          \label{fig:frozen-zigzag}
    \end{subfigure}%
    \hfill
    \begin{subfigure}[t]{0.45\textwidth}
          \centering
          \includegraphics[width=0.6\textwidth]{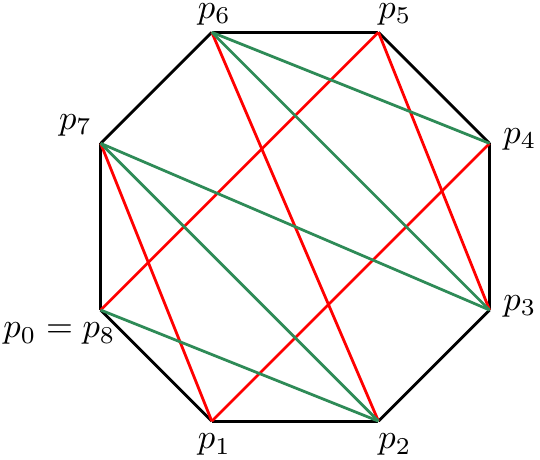}
          \caption{Another zigzag triangulation $T'$ in ${\cal F}_{-X}$.
          }
          \label{fig:other-frozen-zigzag}
    \end{subfigure}
    \caption{Construction of a flip cut set of size $n-3$
    }
    \label{fig:n-3-flip-cut-set}
\end{figure}

\begin{proof}

Consider a ``zigzag'' triangulation $T$ of a convex $n$-gon as shown in Figure~\ref{fig:frozen-zigzag}. 
This triangulation has $n-3$ chords that form a
path $p_0,p_{n-2},p_1,p_{n-3},..., p_{\lfloor \frac{n}{2} \rfloor}$.   
Each chord $e$ of $T$ can flip to a unique other chord $f(e)$.  If we forbid the $n-3$ chords $X := \{f(e) : e \text{ a chord of } T \}$ then $T$ becomes a frozen triangulation, i.e., it becomes an isolated node in the flip graph ${\cal F}_{-X}$. In order to complete the proof, we only need to show that there exists another 
triangulation 
in ${\cal F}_{-X}$
different from $T$. 
(This is where we use that $T$ is a zig-zag triangulation---for example, a star triangulation would not work.)
We create another zigzag triangulation $T'$ in the following way. 
The path of chords for $T'$ starts at the same vertex $p_0=p_n$  but the first chord from $p_0$ goes in the other direction, i.e., to $p_2$.
See Figure \ref{fig:other-frozen-zigzag}. The new zigzag path 
is
$p_0,p_2,p_{n-1},p_3,p_{n-2},...p_{\lceil \frac{n}{2} \rceil}$. We prove that $T'$ does not use any forbidden edges. Note that each forbidden edge crosses only one chord of $T$. However, each chord of $T'$, crosses at least two chords of $T$. This is because each chord in $T'$ of the form $p_{i+2}p_{n-i}$ for $i=0,..,\lfloor \frac{n}{2} \rfloor-2$ crosses $p_{i+1}p_{n-i-2}$ and $p_{i+1}p_{n-i-3}$ from $T$ and any chord of the form $p_{i+1}p_{n-i}$ for $i=1,..,\lfloor \frac{n}{2} \rfloor-2$ in $T'$ crosses $p_{i}p_{n-i-1}$ and $p_{i-1}p_{n-i-1}$ from $T$. So, since each forbidden edge $e \in X$ crosses exactly one chord of $T$, but each chord of $T'$ crosses at least two chords of $T$, therefore $T'$ does not contain any forbidden edges.
\end{proof}

\begin{lemma}
    \label{lemma:existence_of_triangulation}
    Consider a convex point set $P$ with $n$ points
    and a set $X$ of forbidden edges with $|X| \le n-3$.
    Then there is a triangulation of $P$ that uses no forbidden edges, i.e., ${\cal T}_{-X}$ is non-empty.
\end{lemma}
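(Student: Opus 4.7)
\medskip\noindent\textbf{Proof proposal.} The plan is to induct on $n$. The base case $n = 3$ is immediate: $|X| \le 0$ forces $X = \emptyset$, and the unique triangulation of a triangle uses no chords at all.

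For the inductive step, fix $n \ge 4$ and define the \emph{degree} of a vertex $p \in P$ in $X$ to be the number of chords of $X$ incident to $p$. I split into two cases. If some vertex $p$ has degree zero in $X$, then the star triangulation at $p$ consists entirely of chords from $p$, none of which is forbidden, so it lies in ${\cal T}_{-X}(P)$ and we are done. Note that this case necessarily arises whenever $|X| < n/2$, since then the total degree $2|X|$ is strictly less than $n$; in particular it covers $n \in \{4,5\}$ under the hypothesis $|X| \le n-3$.

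Otherwise every vertex has degree at least one in $X$, which forces $2|X| \ge n$ and hence $n \ge 6$. Among the $n$ ears of $P$---the chords $p_{i-1}p_{i+1}$, which are pairwise distinct for $n \ge 5$ and each uniquely associated with the apex $p_i$ they cut off---at most $|X| \le n-3$ can lie in $X$, so at least three ears are non-forbidden. Pick any non-forbidden ear $p_{i-1}p_{i+1}$ and let $P' = P \setminus \{p_i\}$ be the convex $(n-1)$-gon obtained by cutting off $p_i$. The chords of $P'$ are precisely those chords of $P$ with both endpoints in $P'$, except that $p_{i-1}p_{i+1}$ becomes a boundary edge of $P'$, and this edge is non-forbidden by choice. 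Hence the forbidden chords of $P$ that remain chords of $P'$ form a set $X'$ obtained from $X$ by deleting all chords incident to $p_i$, and $|X'| \le |X| - 1 \le (n-1) - 3$ since $p_i$ has degree at least one. By the inductive hypothesis $P'$ admits a triangulation $T' \in {\cal T}_{-X'}(P')$, and adjoining the triangle $p_{i-1}p_ip_{i+1}$ to $T'$ produces a triangulation of $P$ in ${\cal T}_{-X}(P)$.

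The main subtlety---and the only real obstacle---is ensuring the inductive bound $|X'| \le (n-1)-3$ when $|X| = n-3$: the forbidden count must drop by at least one, which requires the ear apex we choose to have positive degree in $X$. Handling vertices of degree zero up front by a star triangulation sidesteps this obstacle cleanly, and the simple counting bound (at most $n-3$ of the $n$ ears are in $X$) supplies a non-forbidden ear whenever the star case does not already finish the argument.
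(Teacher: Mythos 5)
Your proof is correct and takes essentially the same route as the paper's: handle a degree-zero vertex with a star triangulation, and otherwise cut off a non-forbidden ear whose apex has positive degree in $X$ and recurse on the $(n-1)$-gon. The only cosmetic difference is that you note at least three ears are non-forbidden, whereas the paper only needs one.
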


\begin{proof}
The proof is by induction on $n$ with base case $n=3$.
If there is a point $p$ that is incident to no forbidden edges, then take the star triangulation at $p$.
Otherwise, suppose every point is incident to at least one forbidden edge. There are $n$ chords of the form $p_{i-1} p_{i+1}$ (addition modulo $n$).  
So, there exists at least one of them that is not forbidden. Suppose it is $p_{i-1} p_{i+1}$. 
We build a triangulation containing the ear $p_{i-1} p_{i+1}$. 
Let $P'$ be $P - \{p_i\}$ of size $n' = n-1$.  Since $p_i$ is incident to at least one forbidden edge, there are at most $n-4 = n' -3$ forbidden edges of $P'$.  Thus, by induction, there is a triangulation of $P'$ that uses no forbidden edges.  Together with chord $p_{i-1} p_{i+1}$ this provides the desired triangulation of $P$.  Thus ${\cal T}_{-X}$ is non-empty.
\end{proof}

\begin{theorem}
The flip cut number of a convex point set 
$P$ with $n$ points
is $n-3$.
\end{theorem}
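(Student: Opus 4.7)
The plan is to prove the matching lower bound (flip cut number $\ge n-3$) by induction on $n$; together with Lemma~\ref{lemma:flip cut number}, this gives equality. Concretely, I will show that if $|X| \le n-4$ then ${\cal F}_{-X}(P)$ is connected. The base cases $n \le 7$ follow immediately from Lemma~\ref{lemma:F-touches-all}, since $|X| \le n-4$ then forces some vertex to be uncovered by $X$ (because $\lceil n/2 \rceil > n-4$ whenever $n \le 7$), making $X$ non-cut by Lemma~\ref{lemma:F-touches-all}.

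For the inductive step with $n \ge 8$, assume the claim for $n-1$, and suppose for contradiction that $X$ is a flip cut set with $|X| \le n-4$. By Lemma~\ref{lemma:F-touches-all}, every vertex has degree at least one in $X$. Since there are $n$ potential ears $p_{i-1}p_{i+1}$ and at most $n-4$ of them lie in $X$, the set $E$ of \emph{free ears} (ears not in $X$) has $|E| \ge 4$. A key observation is that every triangulation $T \in {\cal T}_{-X}$ contains at least one free ear: any triangulation of a convex polygon has at least two ears, and since all chords of $T$ avoid $X$, all ears of $T$ belong to $E$. Thus ${\cal T}_{-X} = \bigcup_{e \in E} {\cal A}_e$, where ${\cal A}_e := \{T \in {\cal T}_{-X} : e \in T\}$.

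For each free ear $e = p_{i-1}p_{i+1} \in E$, the set ${\cal A}_e$ is in flip-preserving bijection with ${\cal T}_{-X'}(P')$ where $P' := P \setminus \{p_i\}$ and $X' := X \setminus \{\text{chords incident to } p_i\}$; since $p_i$ has degree at least one in $X$, we get $|X'| \le n-5 = (n-1)-4$, so the inductive hypothesis yields that ${\cal F}_{-X'}(P')$---and hence ${\cal A}_e$ as an induced subgraph of ${\cal F}_{-X}$---is connected. Next, for any two free ears $e_1 = p_{i_1-1}p_{i_1+1}$ and $e_2 = p_{i_2-1}p_{i_2+1}$ whose removed vertices $p_{i_1}, p_{i_2}$ are not cyclically consecutive in $P$, the two ears do not cross, so I can apply Lemma~\ref{lemma:existence_of_triangulation} to the convex $(n-2)$-gon $P \setminus \{p_{i_1},p_{i_2}\}$ with its restricted forbidden set $X''$ (whose size is at most $|X|-1 \le n-5 = (n-2)-3$), producing an inner triangulation that, together with the two ears, witnesses that ${\cal A}_{e_1} \cap {\cal A}_{e_2} \ne \emptyset$.

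The final ingredient is a short combinatorial check that the \emph{non-adjacency graph} on $E$---with edges between pairs of free ears whose removed vertices are not cyclically consecutive---is connected whenever $|E| \ge 4$ (among four cyclically placed points some pair is non-adjacent, and this propagates). Combining everything: the ${\cal A}_e$'s cover ${\cal T}_{-X}$, each is connected, and pieces associated with non-adjacent free ears share a common triangulation, so ${\cal F}_{-X}$ is connected, contradicting that $X$ is a flip cut set. The main obstacle I anticipate is the careful bookkeeping around the degrees of $p_{i_1}, p_{i_2}$ and the possibility that $p_{i_1}p_{i_2} \in X$ (needed to confirm $|X''| \le (n-2)-3$), along with verifying that the non-adjacency graph remains connected across all configurations of four or more cyclically placed free ears.
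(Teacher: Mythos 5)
Your proposal is correct and establishes the same bound, but via a genuinely different global decomposition than the paper. The paper fixes two triangulations $S,T \in {\cal T}_{-X}$ and connects them directly by a case analysis (shared ear versus non-crossing ears, followed by routing $S$ through a hub triangulation containing an ear of $T$), inducting on $n$ along the way. You instead cover ${\cal T}_{-X}$ by the sets ${\cal A}_e$ over all free ears $e$, show each ${\cal A}_e$ is connected by induction (with the degree bookkeeping you note), show pieces of non-adjacent free ears overlap via Lemma~\ref{lemma:existence_of_triangulation}, and finish by showing the non-adjacency graph on the $\geq 4$ free ears is connected. Both routes lean on the same two ingredients---Lemma~\ref{lemma:F-touches-all} to force every vertex to have positive $X$-degree, and Lemma~\ref{lemma:existence_of_triangulation} to manufacture hub triangulations in a sub-polygon---so the core mechanism is the same; what differs is the gluing scaffold. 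Your cover-and-overlap scheme is more structural and symmetric (it treats all free ears at once, and it absorbs the paper's small fiddly cases, such as the $n=4$ sub-case in the paper's Case~2, into a clean base case $n\le 7$), at the cost of needing the auxiliary combinatorial check that the non-adjacency graph on $\ge 4$ cyclically placed free ears is connected. That check does go through: the cyclic-adjacency graph restricted to $E$ is a union of paths (or all of $C_n$), has maximum degree $2$, and its complement on $\ge 4$ vertices is disconnected only if the graph contains a spanning complete bipartite subgraph, which forces $|E|\le 4$ and $n\le 4$, excluded here. The paper's approach, by contrast, needs the observation that two triangulations always admit a pair of non-crossing ears (one from each), which is quick but slightly ad hoc. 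Both are valid; yours reads a bit more modularly.
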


\begin{proof}
By Lemma \ref{lemma:flip cut number}, we can disconnect $P$'s flip graph by forbidding $n-3$ edges. So, now we only need to show that 
if $X$ is a set of forbidden edges 
and $|X| \le n-4$,
then ${\cal F}_{-X} (P)$ is connected.
By Lemma~\ref{lemma:F-touches-all}, ${\cal F}_{-X} (P)$ is connected if there is a point $p \in P$ not incident to any edge of $X$.  Thus we may assume that every point $p \in P$ is incident to some edge of $X$. 
Let $S$ and $T$ be two triangulations of $P$ that do not contain any edges of $X$.  We will prove that $S$ is connected to $T$ in ${\cal F}_{-X} (P)$. 
We will prove this by induction on $n$, with the base case $n=3$ where the statement is vacuously true.

We know that each triangulation contains at least two ears.
We consider two cases. 

\noindent
\textbf{Case 1.} 
$S$ and $T$ have a shared ear 
that cuts off point $p_i$. 
Consider sub-polygon $P'$ obtained by removing point $p_i$.
By assumption
there is a forbidden edge incident to $p_i$.
Let $X'$ be the forbidden edges of $X$ not incident to $p_i$.
Then $|X'| \le |X|-1 \le |P|-5 = |P'| - 4$ so
by induction, the triangulations of $P'$ induced from $S$ and $T$ 
are connected in ${\cal F}_{-X'}(P')$.
Thus $S$ and $T$ are connected in ${\cal F}_{-X} (P)$.

\noindent
\textbf{Case 2.} 
$S$ and $T$ have no shared ear.  We 
claim that there is an ear $e_1$ in $S$ and an ear $e_2$ in $T$ such that $e_1$ and $e_2$ do not cross, i.e., the points that are cut off by $e_1$ and by $e_2$ are not adjacent on the convex hull. 
Suppose $S$ has an ear $e_1$ that cuts off point $p_i$.  $T$ has at least two ears, and if they cross $e_1$, they must cut off points $p_{i-1}$ and $p_{i+1}$.  But then the second ear of $S$ cannot cross both these ears of $T$ unless $n=4$ (and $X= \emptyset$), in which case, a single flip converts $S$ to $T$. 

Thus we may assume
a pair of non-crossing ears $e_1$ of $S$ and $e_2$ of $T$.
Suppose $e_1$ cuts off point $q_1$ and $e_2$ cuts off point $q_2$.
Let $P_1$ be $P - \{q_1\}$.  $P_1$ has $n_1 = n-1$ points.
Let $X_1$ be the forbidden edges of $X$ induced on $P_1$. 
By assumption
there is at least one edge of $X$ incident to $q_1$, so $|X_1| \le |X| -1 \le n-5 = n_1 - 4$.  By induction, the flip graph $\mathcal{F}_{-X_1}(P_1)$
is connected.
Let $S_1$ be the triangulation of $P_1$ formed by cutting the ear $e_1$ off $S$.
The plan is to apply induction on $P_1$ to connect triangulation $S_1$ to a new triangulation $R_1$ of $P_1$ that includes the chord $e_2$. 

We construct $R_1$ as follows. Let $P_{2}$ be $P - \{q_1,q_2\}$.
Then $P_2$ has size $n_2 = n-2$. 
Let $X_{2}$ be the forbidden edges of $X$ induced on $P_{2}$.  Then
$|X_{2}| \le |X_1| \le n_1 - 4 = n_{2} - 3$. 
By Lemma~\ref{lemma:existence_of_triangulation}, $P_{2}$ has a triangulation that uses no chords of $X_{2}$. Adding chord $e_2$ yields the triangulation $R_1$. 

By induction on $P_1$ and $X_1$ the triangulations $S_1$ and $R_1$ are connected in $\mathcal{F}_{-X_1}(P_1)$.  
Finally, $R_1$ and $T$ share the ear $e_2$, so by Case 1, they are connected in 
$\mathcal{F}_{-X}(P)$.
Altogether, we have connected $S$ to $T$ in $\mathcal{F}_{-X}(P)$, as required.
\end{proof}

\section{Conclusions and Open Problems}

We examined connectivity of the flip graph of triangulations when some edges between points are forbidden, and introduced the concepts of flip cut edges, flip cut sets, and the flip cut number. 
We gave an $O(n \log n)$ time algorithm to identify flip cut edges and test connectivity after forbidding a flip cut edge, and we proved that the flip cut number of a convex $n$-gon is $n-3$.
We conclude with some open questions.
\begin{enumerate}
\item Is there a polynomial time algorithm to
test if a set of edges is a flip cut set?  To compute the flip cut number?

\item 
The asymptotic diameter of the flip graph of a convex $n$-gon is $2n-10$---a famous result of Sleater, Tarjan and Turston~\cite{sleator1988rotation}, improved to all $n > 12$ by Pournin~\cite{pournin2014diameter}.
For a convex point set and a set $X$ of forbidden edges with $|X| < n-3$ what is the diameter of ${\cal F}_{-X}$ in terms of $n$ and $|X|$?

\item It is open whether there is a polynomial time algorithm for the flip distance problem for a convex polygon.
Is the following generalization 
NP-complete: Given $n$ points in convex position, a set $X$ of forbidden chords, two triangulations $T_1$ and $T_2$, and a number $k$, is the flip distance from $T_1$ to $T_2$ in ${\cal F}_{-X}$ less than or equal to $k$?

\item A flip cut edge is a bottleneck to connecting triangulations via flips.
One might guess that point sets with no flip cut edges provide better mixing properties. More generally, how does the flip cut number affect mixing properties?

\end{enumerate}

\section*{Acknowledgements}
For helpful comments we thank Lionel Pournin and participants of the 2022 Banff International Research Station (BIRS) workshop on Combinatorial Reconfiguration.

\bibliography{main}

\end{document}